\newtheorem{theorem}{Theorem}
\newtheorem{remark}{Remark}
\pgfplotsset{compat=1.18} 
\begin{document}
\title{\vspace{-0.07in}Optimal Steady-State Secondary Control of\\MT-HVdc Grids with Reduced Communications}
\author{Babak Abdolmaleki and Gilbert Bergna-Diaz 
\vspace{-0.15in}


\thanks{This work was supported by the Department of Electric Energy, Norwegian University of Science and Technology (NTNU), under Grant 988775100. Paper no. XXXX-X-2023. (Corresponding author: Babak Abdolmaleki.)}
\thanks{The authors are with the Department of Electric Energy, Norwegian University of Science and Technology (NTNU), 7491 Trondheim, Norway (e-mail: babak.abdolmaleki@ntnu.no; gilbert.bergna@ntnu.no).}
}
\markboth{Submitted for Publication}{Submitted for Publication}

\maketitle
\begin{abstract}
In this paper, we propose a centralized secondary control for the real-time steady-state optimization of multi-terminal HVdc grids under voltage and current limits. First, we present the dynamic models of the grid components, including the modular multilevel converter (MMC) stations and their different control layers. We also derive the quasi-static input-output model of the system, which is suitable for the steady-state control design. Second, we formulate a general optimization problem using this quasi-static model and find the Karush-Kuhn-Tucker optimality conditions of its solutions. Third, we propose a secondary control based on primal-dual dynamics to adjust the voltage setpoints of the dispatchable MMCs, with which the system asymptotically converges to a steady state that satisfies these optimality conditions. Fourth, we provide a communication triggering mechanism to reduce the communication traffic between the secondary control unit and the MMC stations. Finally, we verify our proposal for different case studies by adapting it to an offshore multi-terminal HVdc grid composed of heterogeneous MMC stations simulated in the MATLAB/Simulink environment. The problems of proportional current minimization and loss reduction are two special case studies. 
\end{abstract}
\begin{IEEEkeywords}
HVdc grid, optimization, voltage control.\vspace{-0.07in}
\end{IEEEkeywords}
\section{Introduction}
\subsection{Background and Motivation}
\IEEEPARstart{M}{ulti-terminal} (MT) dc grids are getting increasingly more applications in modern power systems ranging from low-voltage microgrids to inter-area high-voltage dc (HVdc) grids.
They are, for instance, important infrastructures for integration of offshore wind power into the European energy systems \cite{Misyris2022,ENTSO-E}. Multi-purpose MT-HVdc grids are also emerging as viable solutions to interconnect several markets and offshore power generation and consumption hubs (e.g., oil and gas platforms, electrolyzers, etc.) \cite{ENTSO-E}.

In a MT-HVdc grid, each terminal is connected to a converter station, commonly based on a Modular Multilevel Converter (MMC) \cite{Ludois2014}.
Depending on their system-level functionality, the MMC stations can be governed by different lower and upper level control loops \cite{CIGRE,Dominic2022,Soler2023}. 
They can be categorized into dc grid-forming (dc-GFM) and dc grid-following (dc-GFL) stations \cite{Dominic2022,Soler2023}.
A set of dc-GFM stations are in charge of shaping (forming) the HVdc grid voltage profile while the dc-GFL MMCs are tasked with delivering power to/from the ac side. It should be noted that both the dc-GFM and dc-GFL stations can also operate in ac grid-forming (ac-GFM) mode and participate in forming of the ac side voltage and frequency, see, e.g., \cite{Dominic2022,Soler2023}. 

It is well known that, in steady state, dc grids can be represented by a resistive network where the distribution of the branch currents depends on the node voltage differences \cite{Lingwen2014,Kjetil2012,Beerten2013,Abdelwahed2017,Jabr2021}. This means that the voltage setpoint of dc-GFM stations can be adjusted to achieve a desired load (current or power) flow in MT-HVdc grids. Some practical examples of optimization problems in this regard are economic dispatch and proportional load minimization for dc-GFM stations, as well as grid voltage profile improvement and loss reduction.

Indeed, a simple control solution to adjust the voltage setpoint of the dc-GFM stations is the well-known droop control \cite{Kjetil2012,Beerten2013}.
With an optimal design and selection of its parameters, droop control can provide the desired load flow in dc grids for a nominal model under a generation-consumption scenario -- see for example the work in \cite{Abdelwahed2017,Jabr2021,Spiros2023}.
However, in a MT-HVdc grid with high uncertainty in generation, consumption, and constraints, even a well-designed droop controller may not provide optimality for all scenarios. This problem, together with the possibility that the grid operator may require different optimization objectives over different time periods, necessitates real-time voltage adjustments based on more than just local information. Under the hierarchical control policy in HVdc grids, this lies in the secondary and tertiary control levels, which generally operate on a longer (slower) processing timescale than the lower levels \cite{CIGRE,Alvarez2015}. The secondary controller is responsible for restoring the operating conditions to pre-disturbance levels by, e.g., updating the voltage setpoints of the dc-GFM stations to follow a power or current setpoint given by the tertiary control. The tertiary controller, typically based on optimal power flow, generates the optimal setpoints for the secondary controller \cite{Alvarez2015}. The objective of this paper is to design a secondary controller for dc-GFM stations that directly adjusts their voltage setpoints to provide optimal steady-state operation of MT-HVdc grids in real time.\vspace{-0.07in}

\subsection{Literature Review and Research Gaps}
Optimal steady-state operation of MT-HVdc grids has been studied in many papers, and various techniques have been proposed to achieve different useful objectives.
In some works, generally speaking, an offline optimization, repeatedly solved over long periods of time, is used to update the local primary controllers of the converter stations.
For example, in \cite{Gavriluta2015}, a secondary voltage control is proposed for network loss minimization. In this method, an offline optimization problem is solved and its solutions are sent to the converter stations.
In \cite{Papangelis2017}, a centralized model predictive secondary control is proposed for converter stations to follow their given power references subject to tight regulation of the average voltage and the operational limits.
In \cite{CarmonSanchez2020}, a model predictive secondary control is proposed to support the droop controller. The control objectives considered in this work are voltage and power tracking.
A generalized droop control is proposed in \cite{Eriksson2018} to improve the functionality of the converters, especially in case of failure of one station, by better distributing the load among the remaining stations. In this method, the grid operator performs an offline constrained optimization and computes the droop gain matrix accordingly. A somewhat similar approach, but with an adaptive local droop controller, is proposed in \cite{Yogarathinam2019}, where stability constraints are also taken into account.
In \cite{Shinoda2022}, an adaptive droop control is proposed whose parameters are optimally selected based on the available headroom and voltage containment reserve of the converter stations.
In \cite{Zhang2021,Zhang2022}, some secondary controllers are proposed for load sharing, voltage regulation, and loss minimization. In these works, an offline optimization problem is solved and used to adjust the power or voltage references.
A cost-based adaptive droop control is proposed in \cite{Song2021}, where an area cost minimization problem is solved every five minutes and the droop parameters are updated accordingly.
In \cite{Xie2023}, a secondary control is proposed that provides average voltage regulation and loss minimization by repeatedly solving an offline optimization problem and updating the primary controllers.
A somewhat similar secondary control is proposed in \cite{Li2018} for loss minimization and load sharing in dc grids.

In another line of research, innovative dynamic controllers, usually based on consensus algorithms and multi-agent systems, are proposed to achieve some specific control objectives.
In \cite{Wang2021}, a cooperative control is proposed for frequency support and power sharing among the converters, where the dc voltage control is not studied.
In \cite{Aram2018}, a distributed secondary control is proposed to provide accurate load sharing among the converters.
To provide load sharing and voltage regulation, a secondary control is proposed in \cite{Aram2023}, where three different loops are designed to provide frequency support, power sharing, and voltage regulation by adjusting the voltage setpoints of the stations. 
A secondary control is proposed in \cite{Wang2020}, which realizes an adjustable trade-off between load sharing and voltage regulation objectives.
In \cite{Lotfifard2022}, a distributed adaptive control based on a dynamic consensus algorithm is proposed for power sharing among the converters.
In \cite{Zhang2020, Yang2022}, some distributed controllers are proposed to achieve proportional power sharing and average voltage regulation.

The methods in \cite{Gavriluta2015,Papangelis2017,CarmonSanchez2020,Eriksson2018,Yogarathinam2019,Shinoda2022,Zhang2021,Zhang2022,Song2021,Xie2023, Li2018} are all based on offline decision making. In principle, they are feedforward control and optimization methods, which are not robust to unknown disturbances and model uncertainties.
Unlike the work in \cite{Gavriluta2015,Papangelis2017,CarmonSanchez2020,Eriksson2018,Yogarathinam2019,Shinoda2022,Zhang2021,Zhang2022,Song2021,Xie2023, Li2018}, the methods in \cite{Wang2021,Aram2018,Aram2023,Wang2020,Lotfifard2022,Zhang2020,Yang2022} are online feedback controllers.
However, despite their robust performance, they may not control the system to the best steady state for which they were designed, especially when operating constraints are considered.
To address this problem, which is common to many engineering disciplines, there has been a recent tendency to incorporate the relevant constrained optimization algorithms into the feedback control loops in a systematic way \cite{Krishnamoorthy2022,Emiliano2018,Ortman2020,Ortman2022,Häberle2021,Hauswirth2021}.
In this approach, the feedback control laws steer the system towards an optimal and admissible steady state in real time, while preserving some robustness guarantees.
This core idea has gained attention in various disciplines such as process control \cite{Krishnamoorthy2022}, power systems \cite{Emiliano2018,Ortman2020,Ortman2022}, and control systems \cite{Häberle2021,Hauswirth2021}, where it is generally known as Real-Time Optimization (RTO) or Online Feedback Optimization (OFO).
In this study, we apply this idea to MMC-based MT-HVdc grids and provide a framework to bridge the gaps in the reviewed literature.

\subsection{Contributions}
In this paper, we propose a feedback control technique based on \emph{partial} model information and for a general constrained optimization problem, resulting in an optimal steady-state secondary voltage control.
The following contributions are made in this paper.
\begin{itemize}[leftmargin=*]
    \item We consider a MT-HVdc grid interconnecting several areas and offshore production-consumption hubs. We first present the dynamic model of the grid components, including transmission lines and MMC stations based on dc-GFM and ac-GFM technologies. We then derive the linearized quasi-static input-output relationship of the system, which is suitable for the steady-state control design.
    \item We formulate a feedback-based optimization problem and find the Karush-Kuhn-Tucker (KKT) optimality conditions of its solution. We then propose a feedback controller that asymptotically steers the system to a steady state where these conditions are satisfied. The proposed technique is a secondary controller for the dc-GFM stations and is based on partial model information, i.e., it requires only the sensitivity matrix in the quasi-static model. We construct this matrix using the network conductance matrix and the measurements from the ac-GFM stations.
    \item In order to reduce the communication traffic between the control center and the MMC stations, we also offer a communication triggering mechanism based on the event-triggered control technique \cite{Abdolmaleki2019}.
    \item We demonstrate the effectiveness of our proposal by performing a Lyapunov stability analysis and MATLAB-based numerical simulations for different generation-consumption scenarios and optimization objective functions, including proportional current minimization and loss reduction.
\end{itemize}

The rest of this paper is structured as follows. In Section~\ref{Sec:Model}, we present the dynamic and quasi-static system models, which are suitable for time-domain simulations and secondary control design, respectively. In Section~\ref{Sec:Optimization}, we formulate the optimization problem and find the optimality conditions of its solutions. We present our controller in Section~\ref{Sec:ProposedController}, where the stability analysis and the communication reduction mechanism are also presented. In Section~\ref{Sec:CaseStudies}, we present the simulation results for different case studies. Finally, Section~\ref{Sec:Conclusion} concludes the paper.
\section{Dynamic and Quasi-Static System Modelling}
\label{Sec:Model}
In this section, we introduce the system modeling used in our study. First, we present the dynamic models used in the simulations, and then we derive the quasi-static system model suitable for secondary control design and study.
\subsection{System Dynamic Model}
In a dc grid, depending on the application, voltage level, grid topology, and vendor-specific technology in use, different converters with different low-level inner control loops may coexist. In our case studies, we consider a MT-HVdc grid based on MMCs -- see Fig.~\ref{Fig:HVdcGrid}. In the following, we present the dynamics of the dc transmission lines as well as the dynamic models of the MMC stations and the standard cascaded control loops we used to control them. However, our design and analysis can be applied to other types of converters and controllers, depending on the application, technology, and setting of interest.
\begin{figure}
    \centering
    \includegraphics[width=\columnwidth]{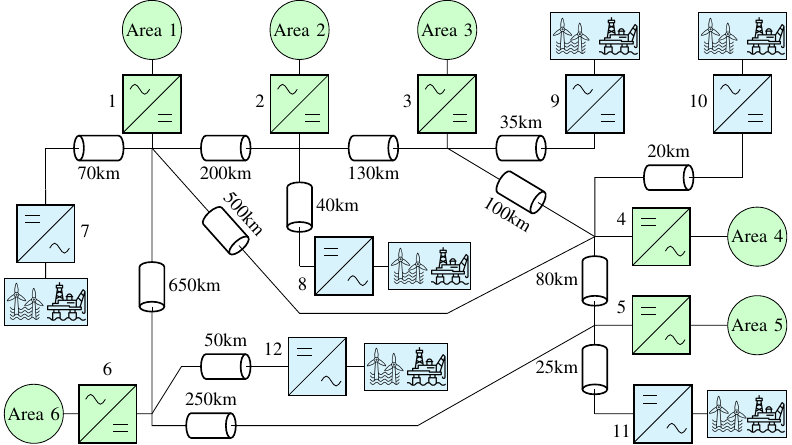}
    \caption{The test MT-HVdc grid based on MMC.}
    \label{Fig:HVdcGrid}
\end{figure}
\subsubsection{Transmission Line Dynamics}
We make use of the fifth-order cable model for each transmission line between the MMC stations in Fig.~\ref{Fig:HVdcGrid}. This model, shown in Fig.~\ref{Fig:Cable}, captures the system level dynamic behavior of the transmission lines with an acceptable accuracy \cite{Sanchez2020}.
\begin{figure}
    \centering
    \includegraphics[width=0.85\columnwidth]{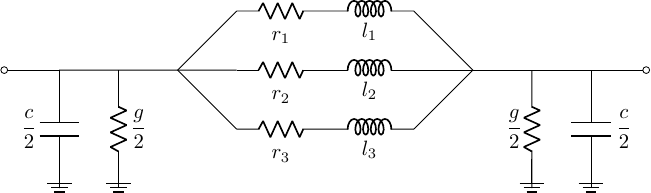}
    \caption{Cable model for dc transmission lines.}
    \label{Fig:Cable}
\end{figure}

\subsubsection{MMC Dynamics}
In \cite{Gilbert2015,Gilbert2018}, under the \textit{compensated modulation} strategy, an energy-based simplified \emph{macroscopic} representation of the MMC is developed which maintains the accuracy of the converter dynamic behavior between its ac and dc terminals.
\begin{figure}
    \centering
    \includegraphics[width=0.95\columnwidth]{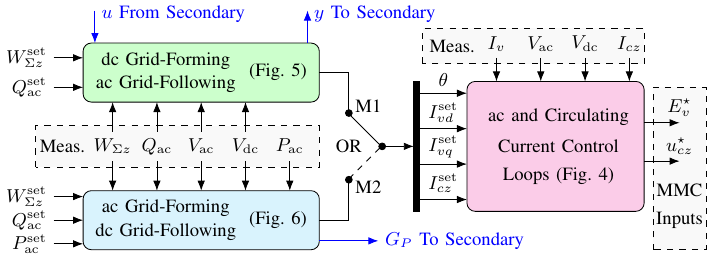}
    \includegraphics[width=0.95\columnwidth]{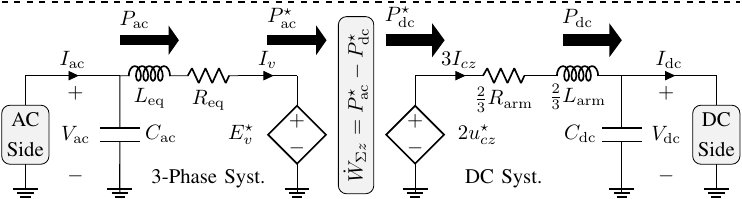}
    \caption{System-level dynamic model of an MMC \cite{Gilbert2018} and its control system.}
    \label{Fig:MMC}
\end{figure}
As shown in Fig.~\ref{Fig:MMC}, in this model the dc and ac sides of the MMC are coupled through the dynamics of zero sequence energy sum $W_{\Sigma z}$. The parameters, $R_{\rm arm}$ and $L_{\rm arm}$ are the MMC arm resistance and inductance, while $R_{\rm eq}=\tfrac{1}{2}R_{\rm arm}+R_f$ and $L_{\rm eq}=\tfrac{1}{2}L_{\rm arm}+L_f$ are its \textit{equivalent} output resistance and inductance on the ac side, with $R_f$ and $L_f$ the actual ac-side resistance and inductance, respectively. The capacitors $C_{\rm dc}$ and $C_{\rm ac}$ represent the total capacitors at the points of coupling to the dc and ac grids, including the line capacitors and the dedicated physical capacitors, if applicable. The dc input $u_{cz}^*$ is the zero-sequence component of the leg voltage, which drives the MMC circulating current with the zero-sequence component $I_{cz}$. The ac input $E_v^*$ is the three-phase voltage at the MMC terminal, which drives the ac current $I_v$. The zero-sequence energy dynamics, couples the dynamics of the ac and dc sides, where $P_{\rm dc}^*=6u_{cz}^*I_{cz}$ and $P_{\rm ac}^* = E_{v}^* \cdot I_v$ (dot-product of the three-phase signals). The MMC station absorbs the power $P_{\rm ac}=V_{\rm ac}\cdot I_v$ from the ac side and injects the power $P_{\rm dc}=3V_{\rm dc}I_{cz}$ to the dc side. In our simulations, we make use of this model, which is shown to be efficient, albeit accurate, for power system-oriented studies and simulations\cite{Gilbert2018}.
\subsubsection{MMC Control System}
Fig.~\ref{Fig:MMC} shows the standard hierarchical control structure of the MMC stations. The dc and ac inputs $u_{cz}^*$ and $E_{v}^*$ are generated by the circulating and ac current controllers, respectively.
As depicted in Fig.~\ref{Fig:CurrentControl}, these control loops synthesize the feedback measurements in a synchronous ($dq$) reference frame that rotates with the speed of $\dot{\theta}$, and they are designed based on the standard proportional-integral (PI) controllers \cite{Yazdani2010,Dominic2022,Soler2023}.
\begin{figure}
    \centering
    \includegraphics[width=.85\columnwidth]{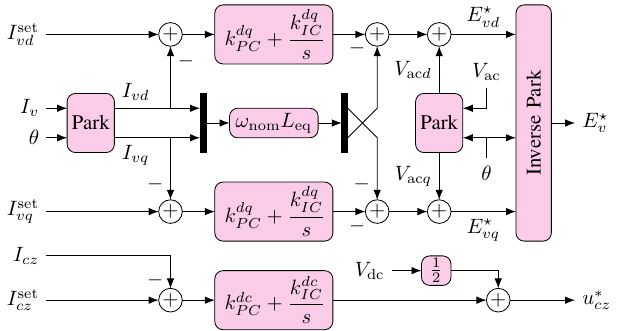}
    \caption{The inner ac and zero-sequence circulating current control loops.}
    \label{Fig:CurrentControl}
\end{figure}
As shown in Fig.~\ref{Fig:MMC}, the current setpoints and the rotating angle are generated by either the ac grid-forming (ac-GFM) controller in Fig.~\ref{Fig:ac-GFM} or the dc grid-forming (dc-GFM) controller in Fig.~\ref{Fig:dc-GFM}, depending on the system-level functionality of the MMC station. In the MT-HVdc grid shown in Fig.~\ref{Fig:HVdcGrid}, the MMCs connected to the areas are governed by dc-GFM technique while the offshore MMC stations operate in ac-GFM mode.
\begin{figure}
    \centering
    \includegraphics[width=0.9\columnwidth]{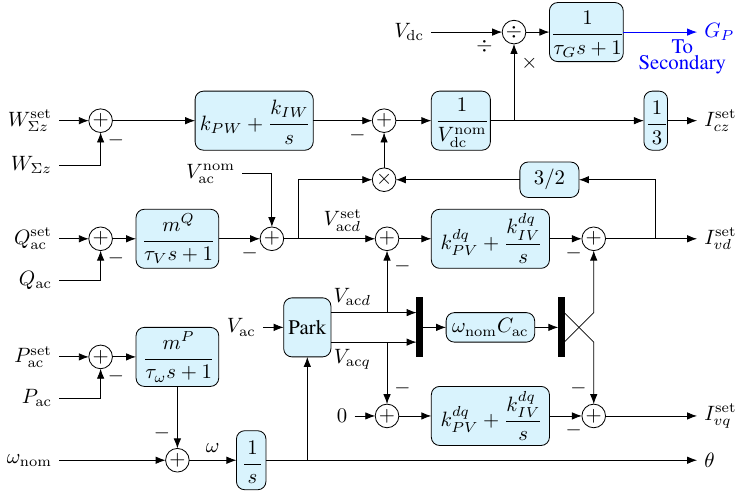}
    \caption{The ac grid-forming/dc grid-following (ac-GFM) outer control loops. This controller is applied to the offshore-connected MMC stations in Fig.~\ref{Fig:HVdcGrid}}
    \label{Fig:ac-GFM}
\end{figure}
\begin{figure}
    \centering
    \includegraphics[width=0.9\columnwidth]{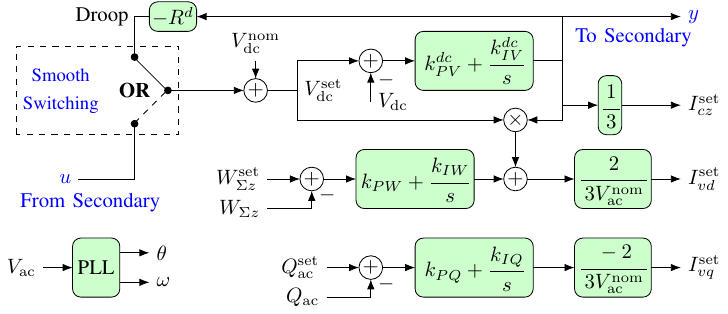}
    \caption{The dc grid-forming/ac grid-following (dc-GFM) outer control loops. This controller is applied to the area-connected MMC stations in Fig.~\ref{Fig:HVdcGrid}.}
    \label{Fig:dc-GFM}
\end{figure}
The main difference between the two technologies is that, \textit{from a dc grid perspective}, the dc-GFM stations are dispatchable and they can be used to control the voltage profile of the HVdc grid, while the ac-GFM stations are constant power sources or loads with grid-dictated dc voltages.

\subsection{Quasi-Static Input-Output Model of MT-HVdc Grid}
In this section, we derive a quasi-static model of the grid that is suitable for steady-state optimization of the system and secondary control design. Assuming that the capacitors in Fig.~\ref{Fig:Cable} are open and the inductors are short-circuited, we can write the dc network current flow equations as \cite{Lingwen2014,Kjetil2012,Beerten2013,Abdelwahed2017,Jabr2021}
\begin{IEEEeqnarray}{c}
\label{eq:QSModel}
I_{{\rm dc}i}=G_iV_{{\rm dc}i} + {\sum}_{j} (1/R_{ij})(V_{{\rm dc}i}-V_{{\rm dc}j}),\quad \forall i,\quad\IEEEyesnumber\IEEEyessubnumber\label{eq:CurrentFlow}
\end{IEEEeqnarray}
where $G_i$ is the total parallel conductance of all the cables connected to the station $i$ and $R_{ij}$ is the total cable resistance between stations $i$ and $j$.
Let $\mathcal{N}$ and $\mathcal{M}$, with the cardinalities $n$ and $m$, denote the sets of dc-GFM and ac-GFM MMC stations, respectively. If the system and the mentioned controllers are properly designed, in steady state the MMCs achieve
\begin{IEEEeqnarray}{rCll}
V_{{\rm dc}i}&=&V_{\rm nom}+u_i,\qquad y_i=I_{{\rm dc}i},&\qquad \forall i\in \mathcal{N},\IEEEyessubnumber\label{eq:DCGMF-QS}\\
I_{{\rm dc}i}&=&P_{{\rm dc}i}/V_{{\rm dc}i},&\qquad \forall i\in \mathcal{M},\IEEEyessubnumber\label{eq:ACGFM-QS}
\end{IEEEeqnarray}
where $(u_i,y_i)$ is the secondary control input-output pair of the $i$th dc-GFM MMC, with $u_i$ to be designed later.
For simplicity, we can linearize the equation \eqref{eq:ACGFM-QS} about $\bar{V}_{{\rm dc}i}$ as
\begin{IEEEeqnarray}{rCll}
I_{{\rm dc}i}&=&-G_{Pi}V_{{\rm dc}i} + \bar{I}_{{\rm dc}i},&\quad \forall i\in \mathcal{M},\label{eq:ACGFM-QS-Linear}
\end{IEEEeqnarray}
where $G_{Pi}=P_{{\rm dc}i}/\bar{V}^2_{{\rm dc}i}$ and $\bar{I}_{{\rm dc}i}=2 P_{{\rm dc}i}/\bar{V}_{{\rm dc}i}$.
Fig.~\ref{Fig:QSModel} shows the circuit diagram of the quasi-static model in \eqref{eq:QSModel} and linearization of the ac-GFM MMCs.
\begin{figure}
    \centering
    \includegraphics[width=\columnwidth]{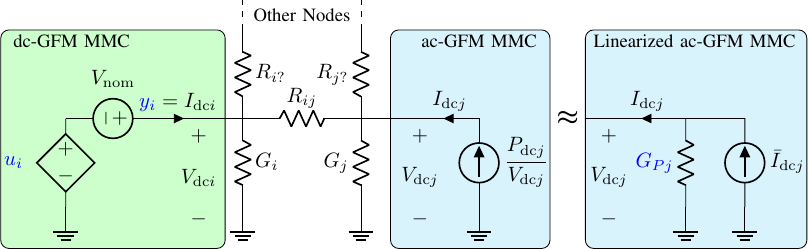}
    \caption{Circuit diagram of the MT-HVdc grid quasi-static model \eqref{eq:QSModel}-\eqref{eq:ACGFM-QS-Linear}.}
    \label{Fig:QSModel}
\end{figure}

For dc-GFM MMCs, let us now define the vectors $u=\mathrm{col}(u_1,\ldots,u_n)$, $y=\mathrm{col}(y_1,\ldots,y_n)$ and $1_n=\mathrm{col}(1,\ldots,1)$. We also define the matrix $G_P=\mathrm{diag}(G_{P1},\ldots,G_{Pm})$ and the vectors $I_\mathcal{M}=\mathrm{col}(\bar{I}_{{\rm dc}1},\ldots,\bar{I}_{{\rm dc}m})$ and $V_\mathcal{M}=\mathrm{col}(V_{{\rm dc}1},\ldots,V_{{\rm dc}m})$ for the ac-GFM MMCs. With this, we can compactly write \eqref{eq:CurrentFlow}, \eqref{eq:DCGMF-QS} and \eqref{eq:ACGFM-QS-Linear} as
\begin{IEEEeqnarray}{rCl}
\begin{bmatrix}
    y\\
    I_\mathcal{M}
\end{bmatrix}&=&
\begin{bmatrix}
    G_\mathcal{N} & G_\mathcal{NM}\\
    G_\mathcal{MN} & G_\mathcal{M}+G_P
\end{bmatrix}
\begin{bmatrix}
    V_{\rm nom}1_n + u \\
    V_\mathcal{M}
\end{bmatrix},
\end{IEEEeqnarray}
where $G_\mathcal{N}$, $G_\mathcal{NM}$, $G_\mathcal{MN}$, are $G_\mathcal{M}$ blocks of the network conductance matrix, obtained from the current flow equations \eqref{eq:CurrentFlow}.
One can now apply Kron reduction\cite{DorflerKR2013} and derive the reduced quasi-static input-output map
\begin{IEEEeqnarray}{rCl}
y &=& G_{\rm red} u + w,\label{eq:Input-OutputMap}
\end{IEEEeqnarray}
where $w = V_{\rm nom} G_{\rm red} 1_n + G_\mathcal{NM}^{\rm red} I_\mathcal{M}$ with the reduced matrices $G_\mathcal{NM}^{\rm red}=G_\mathcal{NM} (G_P +G_\mathcal{M})^\dagger$ and $G_{\rm red}=G_\mathcal{N} - G_\mathcal{NM}^{\rm red} G_\mathcal{MN}$; the superscript $\dagger$ stands for the Moore–Penrose inverse. In the next section, we use this model to design a secondary control for real-time steady-state optimization of the MT-HVdc grid.


\section{Problem Statement and Optimality Conditions}
\label{Sec:Optimization}
In the previous section, we developed the linear quasi-static input-output model \eqref{eq:Input-OutputMap}, where the input and output correspond to the dc current and voltage setpoints as $ I_{{\rm dc}i} \equiv y_i $ and $V_{{\rm dc}i}^{\rm set} \equiv V_{\rm nom} + u_i$, respectively. In this section, we use that model to state the problem and find conditions for optimality of its solution.
Consider the optimization problem
\begin{IEEEeqnarray}{c}
\label{eq:Feedback-based_Program}
\min_{u}\,f(u,y),\IEEEyesnumber\IEEEyessubnumber\\
\text{subject to }
\begin{cases}
y=G_{\rm red} u+w\\
I_{{\rm dc}i}^{\rm min} \leq y_i \leq I_{{\rm dc}i}^{\rm max}\\
V_{{\rm dc} i}^{\rm min} \leq V_{\rm nom} + u_i \leq V_{{\rm dc} i}^{\rm max}\\
\end{cases}\!\!\!\forall i \in \mathcal{N},\quad\IEEEyessubnumber\label{eq:Feedback-based_Constraints}
\end{IEEEeqnarray}
where $f$ is a user-defined convex cost function, and where \eqref{eq:Feedback-based_Constraints} shows the equality and inequality constraints on the inputs and outputs. The above cost function is very general and can be defined to solve many problems such as transmission loss reduction, economic dispatch, proportional current minimization for the dispatchable MMCs, voltage regulation, etc., as we will show in our simulation case studies.
Setting $y=G_{\rm red} u+w$ into $f(u,y)$, one obtains $g(u)=f(u,G_{\rm red} u+w)$ and can formulate the equivalent problem
\begin{IEEEeqnarray}{c}
\label{eq:Model-based_Program}
\min_{u}\,g(u),\IEEEyesnumber\IEEEyessubnumber\\
\text{subject to }
\begin{cases}
I_{\rm dc}^{\rm min} \leq G_{\rm red} u+w \leq I_{\rm dc}^{\rm max}\\
V_{{\rm dc}}^{\rm min} \leq V_{\rm nom}1_n + u \leq V_{{\rm dc}}^{\rm max}
\end{cases}\:\:\:\:\IEEEyessubnumber\label{eq:Model-based_Constraints}
\end{IEEEeqnarray}
where the lower and upper bounds are now in vector form.
Solving \eqref{eq:Model-based_Program} using the available solvers is computationally efficient and can be done repeatedly in real time. However, it requires knowledge of both $G_{\rm red}$ and $w$, which may be subject to model uncertainties and unknown disturbances. This leads to a model mismatch that can degrade the optimal operation of the system and even prevent its safe operation by violating the constraints. In this paper, however, instead of repeatedly solving an optimization problem for different $G_{\rm red}$ and $w$, we take advantage of the output feedback $y$ and propose an online feedback controller that drives the system to the optimal steady state while increasing the robustness guarantees and respecting the operational limits. To this aim, we need to characterize the optimal steady state of the system and then incorporate an optimization algorithm into the feedback controller. We first find the sufficient and necessary conditions for optimality and then design a controller that establishes these conditions.

Using the dual variables $\lambda_{\rm max}={\rm col}(\lambda_1^{\rm max},\cdots,\lambda_n^{\rm max})$, $\lambda_{\rm min}={\rm col}(\lambda_1^{\rm min},\cdots,\lambda_n^{\rm min})$, $\zeta_{\rm max}={\rm col}(\zeta_1^{\rm max},\cdots,\zeta_n^{\rm max})$ and $\zeta_{\rm min}={\rm col}(\zeta_1^{\rm min},\cdots,\zeta_n^{\rm min})$, we now for \eqref{eq:Model-based_Program} define the Lagrangian 
\begin{IEEEeqnarray}{rCl}
\mathbf{L}(\mathcal{X}) &=&g(u)+\zeta_{\rm max}^\top   (G_{\rm red} u+w - I_{\rm dc}^{\rm max})\IEEEnonumber\\
&& +\: \zeta_{\rm min}^\top   (I_{\rm dc}^{\rm min}-G_{\rm red} u - w) \IEEEnonumber\\
&& +\: \lambda_{\rm max}^\top (V_{\rm nom}1_n + u - V_{{\rm dc}}^{\rm max}) \IEEEnonumber\\
&& +\: \lambda_{\rm min}^\top (V_{{\rm dc}}^{\rm min}-V_{\rm nom}1_n - u ), \IEEEyesnumber\label{eq:Lagr}
\end{IEEEeqnarray}
where $\mathcal{X}={\rm col}(u,\zeta_{\rm max},\zeta_{\rm min},\lambda_{\rm max},\lambda_{\rm min})$. Assume that the Slater's conditions are satisfied for the problem in hand, e.g., it is a quadratic minimization problem with linear constraints. Then, the KKT conditions are necessary and sufficient for optimality. In other words, the above variables are optimal, if and only if they satisfy the KKT conditions\cite[Ch. 5.5]{Boyd}
\begin{IEEEeqnarray}{rCll}
\label{eq:KKTConditions}
0_n&=& \begin{bmatrix}
\mathcal{I}_n \!&\! G_{\rm red}^\top
\end{bmatrix} \nabla f(u,y) \IEEEnonumber\\
&&+\: G_{\rm red}^\top (\zeta_{\rm max} - \zeta_{\rm min}) + (\lambda_{\rm max}-\lambda_{\rm min}),&\IEEEyesnumber\IEEEyessubnumber\label{eq:KKTStationary}\\
0&=&\lambda_i^{\rm max}(V_{\rm nom} + u_i -V_{{\rm dc}i}^{\rm max}),\quad &\forall i \in \mathcal{N},\qquad\IEEEyessubnumber\label{eq:KKTLambdaMax}\\
0&=&\lambda_i^{\rm min}(V_{{\rm dc}i}^{\rm min}-V_{\rm nom} - u_i ),\quad &\forall i \in \mathcal{N},\quad\IEEEyessubnumber\label{eq:KKTLambdaMin}\\
0&=&\zeta_i^{\rm max}(y_i-I_{{\rm dc}i}^{\rm max}),\quad &\forall i \in \mathcal{N},\quad\IEEEyessubnumber\label{eq:KKTZetaMax}\\
0&=&\zeta_i^{\rm min}(I_{{\rm dc}i}^{\rm min}-y_i),\quad &\forall i \in \mathcal{N},\quad\IEEEyessubnumber\label{eq:KKTZetaMin}\\
0&\leq& \lambda_i^{\rm max},\,\, 0\leq \lambda_i^{\rm min},\,\, 0\leq \zeta_i^{\rm max},\,\, 0\leq \zeta_i^{\rm min},\quad &\forall i \in \mathcal{N},\quad\IEEEyessubnumber\label{eq:KKTDualPositive}\\
&&V_{{\rm dc}i}^{\rm min}\leq V_{\rm nom} + u_i \leq V_{{\rm dc}i}^{\rm max},\quad &\forall i \in \mathcal{N},\quad\IEEEyessubnumber\label{eq:KKTVoltageConstraints}\\
&&I_{{\rm dc}i}^{\rm min}\leq y_i \leq I_{{\rm dc}i}^{\rm max},\quad &\forall i \in \mathcal{N},\quad\IEEEyessubnumber\label{eq:KKTCurrentConstraints}
\end{IEEEeqnarray}
where $\nabla f(u,y)$ is the gradient of $f(u,y)$ and $0_n=\mathrm{col}(0,\ldots,0)$. In the above equations we used $y=G_{\rm red} u+w$ and $\nabla g(u)=\begin{bmatrix}
\mathcal{I}_n \!&\! G_{\rm red}^\top
\end{bmatrix} \nabla f(u,y)$. In the next section, we propose a controller based on primal-dual dynamics \cite{Cherukuri2016} such that this set of optimality conditions is an attractive steady state of the closed-loop system.
\begin{remark}
In \eqref{eq:Feedback-based_Program}, we considered the specific inequality constraints of input-output limits. However, one may consider the general constraint of $A_{\rm const} u \leq b_{\rm const}$ without having to make significant changes in our upcoming formulations. An example of such a consideration is line current limits.
\end{remark}
\section{Proposed Central Secondary Control}
\label{Sec:ProposedController}
We are now ready to introduce our secondary controller. We first present the controller which is based on the primal-dual dynamics. Then, we study the stability of the closed-loop system and show that under the proposed controller, the system asymptotically converges to a steady state where the KKT conditions \eqref{eq:KKTConditions} are satisfied. We also offer a communication-triggering mechanism to reduce communication traffic between the central secondary controller the MMC stations.

\subsection{Optimal Steady-State Control with Primal-Dual Dynamics}
\begin{figure}
\centering
\includegraphics[width=0.9\columnwidth]{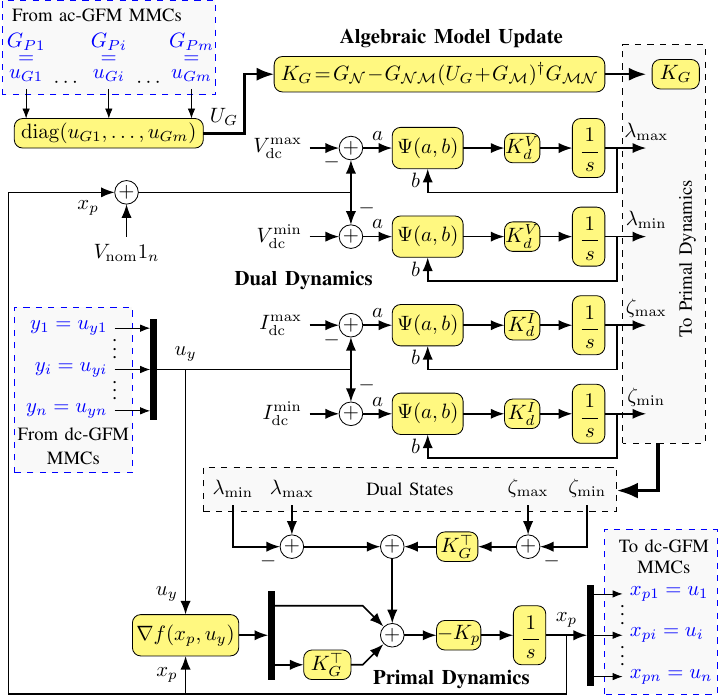}
\caption{Schematic diagram of the proposed secondary controller based on primal-dual dynamics in \eqref{eq:PDD}, in vector form.\label{Fig:SecondaryControl}}
\end{figure}
Given the characterization of the optimal solutions in \eqref{eq:KKTConditions}, it is natural to use the primal-dual dynamics to find them \cite{Cherukuri2016}. Accordingly, we propose the secondary controller
\begin{IEEEeqnarray}{rCl}
\label{eq:PDD}
u &=& x_p,\IEEEyesnumber\IEEEyessubnumber\label{eq:SecondaryInput}\\
\dot{x}_p&=& -K_p\begin{bmatrix}
\mathcal{I}_n \!&\! K_G^\top
\end{bmatrix} \nabla f(x_p,u_y)  \IEEEnonumber\\
&&- K_p K_G^\top (\zeta_{\rm max} - \zeta_{\rm min})-K_p(\lambda_{\rm max}-\lambda_{\rm min}),\quad\IEEEyessubnumber\label{eq:PrimalDynamics}\\
\dot{\zeta}_{\rm max}&=&K_d^I \Psi \big( u_y-I_{\rm dc}^{\rm max},\zeta_{\rm max} \big),\IEEEyessubnumber\label{eq:DualZetaMax}\\
\dot{\zeta}_{\rm min}&=& K_d^I \Psi \big( I_{\rm dc}^{\rm min}-u_y,\zeta_{\rm min} \big),\IEEEyessubnumber\label{eq:DualZetaMin}\\
\dot{\lambda}_{\rm max}&=&K_d^V \Psi \big( V_{\rm nom}1_n + x_p - V_{{\rm dc}}^{\rm max},\lambda_{\rm max} \big),\quad\,\,\,\IEEEyessubnumber\label{eq:DualLambdaMax}\\
\dot{\lambda}_{\rm min}&=&K_d^V \Psi \big(V_{{\rm dc}}^{\rm min}-V_{\rm nom}1_n - x_p ,\lambda_{\rm min} \big),\IEEEyessubnumber\label{eq:DualLambdaMin}\\
u_y &=& y,\IEEEyessubnumber\label{eq:PrimalInput}
\end{IEEEeqnarray}
where the diagonal matrices $K_p,K_d^I,K_d^V\succ 0$ are tunable and for any $a={\rm col}(a_1,\cdots,a_n)$ and $b={\rm col}(b_1,\cdots,b_n)$, the vector function $\Psi:\mathbb{R}^{2n}\mapsto \mathbb{R}^n$ is defined as
\begin{IEEEeqnarray}{C}
\Psi(a,b)={\rm col}\Big( \psi(a_1,b_1),\ldots,\psi(a_n,b_n) \Big),\IEEEyesnumber\IEEEyessubnumber\label{eq:Psi}\\
\text{where }\psi(a_i,b_i)=\begin{cases}
a_i & \text{if } b_i>0\\
\max\{0,a_i\} & \text{if } b_i\leq 0
\end{cases}\quad\forall i\in\mathcal{N}.\qquad\IEEEyessubnumber\label{eq:psi}
\end{IEEEeqnarray}
The matrix $K_G$ is an online construction of $G_{\rm red}$ which is algebraically obtained by using the Moore-Penrose inverse as
\begin{IEEEeqnarray}{rCl}
\label{eq:AlgebraicConstruction}
K_G &=& G_\mathcal{N} - G_\mathcal{NM} (G_\mathcal{M}+U_G)^\dagger G_\mathcal{MN},\IEEEyesnumber\IEEEyessubnumber\label{eq:ModelMatrix}\\
U_G &=& G_P.\IEEEyessubnumber\label{eq:ConductanceInput}
\end{IEEEeqnarray}
where $U_G={\rm diag}(u_{G1},\ldots,u_{Gm})={\rm diag}(G_{P1},\ldots,G_{Pm})\in \mathbb{R}^{m \times m}$ is a diagonal input matrix of the conductances in \eqref{eq:ACGFM-QS-Linear} received from the ac-GFM stations.
Setting $\dot{x}_p=\dot{\zeta}_{\rm max}=\dot{\zeta}_{\rm min}=\dot{\lambda}_{\rm max}=\dot{\lambda}_{\rm min}=0_n$, one can establish that under this controller, the system achieves a steady state where the KKT conditions \eqref{eq:KKTConditions} are satisfied; i.e., the controller acts as an online feedback optimizer that steers the operation of the system towards optimality in real-time.
Fig.~\ref{Fig:SecondaryControl} shows a schematic diagram of the controller which exchanges information with the underlying subsystems shown in Fig.~\ref{Fig:ac-GFM} and Fig.~\ref{Fig:dc-GFM}.

\subsection{Convergence and Stability Analysis}
\label{Sec:Stability}

The closed-loop system consists of \eqref{eq:Input-OutputMap} and \eqref{eq:PDD} and we analyze the stability of its steady state. We first make the system stability statement and then use the convergence analysis of primal-dual dynamics in\cite[Sec. 4]{Cherukuri2016} to prove it. The following theorem states the stability for the closed-loop system.
\begin{theorem}
If the problem \eqref{eq:Model-based_Program} is convex, then under the primal-dual dynamics \eqref{eq:PDD}-\eqref{eq:AlgebraicConstruction}, the system \eqref{eq:Input-OutputMap} asymptotically converges to a steady state that is the optimal solution of \eqref{eq:Model-based_Program}.
\end{theorem}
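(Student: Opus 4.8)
The plan is to show that, once the controller's internal model $K_G$ coincides exactly with the true sensitivity $G_{\rm red}$, the closed loop formed by \eqref{eq:Input-OutputMap} and \eqref{eq:PDD} is precisely the projected primal-dual (saddle-point) flow of the Lagrangian $\mathbf{L}$ in \eqref{eq:Lagr}, so that the convergence machinery of \cite[Sec.~4]{Cherukuri2016} applies directly. First I would establish the exactness of the internal model. Comparing the algebraic construction \eqref{eq:AlgebraicConstruction} under $U_G=G_P$ from \eqref{eq:ConductanceInput} with the definition of $G_{\rm red}$ following \eqref{eq:Input-OutputMap}, one sees at once that $K_G=G_\mathcal{N}-G_\mathcal{NM}(G_\mathcal{M}+G_P)^\dagger G_\mathcal{MN}=G_{\rm red}$. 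This is the only place where the partial-model assumption enters: the controller reproduces the true sensitivity from the conductance blocks together with the measured conductances $G_{Pi}$ reported by the ac-GFM stations.

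Next I would close the loop algebraically, treating \eqref{eq:Input-OutputMap} as the (settled) quasi-static plant. Since $y=G_{\rm red}x_p+w$ and $u_y=y$, substituting into \eqref{eq:PrimalDynamics} and using $K_G=G_{\rm red}$ gives $\dot{x}_p=-K_p\big([\mathcal{I}_n\ G_{\rm red}^\top]\nabla f(x_p,G_{\rm red}x_p+w)+G_{\rm red}^\top(\zeta_{\rm max}-\zeta_{\rm min})+(\lambda_{\rm max}-\lambda_{\rm min})\big)$. By the chain-rule identity $\nabla g(u)=[\mathcal{I}_n\ G_{\rm red}^\top]\nabla f(u,y)$ recorded below \eqref{eq:KKTConditions}, the bracketed term is exactly $\nabla_u\mathbf{L}(\mathcal{X})$, so the primal equation is the gradient descent $\dot{x}_p=-K_p\nabla_u\mathbf{L}$. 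Likewise, each dual equation \eqref{eq:DualZetaMax}--\eqref{eq:DualLambdaMin} carries as its first argument the corresponding constraint residual, which is precisely the partial gradient of $\mathbf{L}$ in that multiplier; and the map $\Psi$ in \eqref{eq:psi} implements the positive-orthant projection that freezes a multiplier's ascent direction whenever it would drive the multiplier negative. Hence every dual equation is projected gradient ascent of $\mathbf{L}$, and the closed loop is exactly the projected saddle-point flow of $\mathbf{L}$ in the metric set by the positive diagonal gains $K_p,K_d^I,K_d^V$.

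Having matched the two flows, I would verify that equilibria coincide with optimizers and then invoke the convergence result. Setting all derivatives to zero recovers stationarity \eqref{eq:KKTStationary} from the primal equation, and complementary slackness together with primal/dual feasibility \eqref{eq:KKTLambdaMax}--\eqref{eq:KKTCurrentConstraints} from the projected dual equations, so the equilibrium set is exactly the KKT set of \eqref{eq:Model-based_Program}; under the Slater condition assumed earlier this set is nonempty and equals the set of primal-dual optimizers. Convexity of \eqref{eq:Model-based_Program} makes $\mathbf{L}$ convex in $u$ and affine (hence concave) in the multipliers, so the hypotheses of \cite[Sec.~4]{Cherukuri2016} hold; their LaSalle-type argument built on the quadratic distance-to-saddle Lyapunov function then yields asymptotic convergence of the closed-loop trajectories to a point of this set, i.e., to an optimal solution of \eqref{eq:Model-based_Program}.

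The main obstacle I anticipate is making the equivalence airtight rather than the convergence step itself. Specifically, I must confirm that the feedback substitution $u_y=y$ reproduces the model-based gradients exactly (which hinges entirely on $K_G=G_{\rm red}$, and on $w$ being a fixed disturbance), and that the piecewise map $\Psi$ coincides with the discontinuous projection operator used in \cite{Cherukuri2016} so that their nonsmooth invariance analysis transfers without modification. Everything beyond these identifications is bookkeeping.
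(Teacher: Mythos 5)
Your proposal is correct and follows essentially the same route as the paper: identify the closed loop (with $K_G=G_{\rm red}$ and $u_y=y=G_{\rm red}u+w$) as the projected primal-dual saddle-point flow of the Lagrangian $\mathbf{L}$ in \eqref{eq:Lagr}, check convexity in $u$ and linearity in the multipliers, and invoke the convergence theorem of \cite[Sec.~4]{Cherukuri2016} with the gain-weighted quadratic distance-to-saddle Lyapunov function. You are in fact more explicit than the paper about the exactness of the internal model and the feedback substitution, which the paper leaves implicit when it asserts that the stacked dynamics are consistent with \cite[Eqs.~(3)--(6)]{Cherukuri2016}.
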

\begin{proof}
Considering $x_d={\rm blkcol}(\zeta_{\rm max},\zeta_{\rm min},\lambda_{\rm max},\lambda_{\rm min})$, let us define the new Lagrangian function
\begin{IEEEeqnarray}{rCl}
\mathbf{L}_{\rm new}(x_p,x_d) &=& -\mathbf{L}(x_p,\zeta_{\rm max},\zeta_{\rm min},\lambda_{\rm max},\lambda_{\rm min}),\IEEEyesnumber\label{eq:NewLagrangian}
\end{IEEEeqnarray}
where $\mathbf{L}(\cdot)$ is given in \eqref{eq:Lagr}. The Lagrangian $\mathbf{L}(\mathcal{X})$ in \eqref{eq:Lagr} is \textit{convex} in $x_p$ and \textit{linear} in the dual variables $x_d={\rm blkcol}(\zeta_{\rm max},\zeta_{\rm min},\lambda_{\rm max},\lambda_{\rm min})$. Thus, the new Lagrangian $\mathbf{L}_{\rm new}$ is \textit{concave} in $x_p$ and \textit{convex} in $x_d$. We can also write \eqref{eq:PDD}-\eqref{eq:AlgebraicConstruction} in the stacked form
\begin{IEEEeqnarray}{rCl}
\label{eq:PDDStacked}
\dot{x}_p &=& K_p\nabla_{x_p} \mathbf{L}_{\rm new}(x_p,x_d),\IEEEyesnumber\IEEEyessubnumber\label{eq:PrimalStacked}\\
\dot{x}_d &=& K_d \Pi \big(-\nabla_{x_d}\mathbf{L}_{\rm new}(x_p,x_d),x_d\big),\IEEEyessubnumber\label{eq:DualStacked}
\end{IEEEeqnarray}
where $K_d={\rm blkdiag}(K_d^I,K_d^I,K_d^V,K_d^V)$ and where for any $a={\rm col}(a_1,\cdots,a_{4n})$ and $b={\rm col}(b_1,\cdots,b_{4n})$ the function $\Pi:\mathbb{R}^{8n}\mapsto \mathbb{R}^{4n}$ is defined as
\begin{IEEEeqnarray}{rCl}
\Pi(a,b)&=& {\rm col} \Big ( \psi(a_1,b_1),\ldots,\psi(a_{4n},b_{4n}) \Big ) ,\IEEEyesnumber\label{eq:Pi}
\end{IEEEeqnarray}
where $\psi(a_i,b_i)$ is defined in \eqref{eq:psi}. The Lagrangian \eqref{eq:NewLagrangian} and the primal-dual dynamics \eqref{eq:PDDStacked} are consistent with \cite[Eqs. (3)-(6)]{Cherukuri2016}. Thus, with $x_p^*$ and $x_d^*$ being the optimal values and considering the Lyapunov function
\begin{IEEEeqnarray}{rCl}
\mathcal{V}(x_p,x_d) &=& \tfrac{1}{2} (x_p-x_p^*)^\top K_p^{-1} (x_p-x_p^*)\IEEEnonumber\\
&&\qquad\qquad + \tfrac{1}{2} (x_d-x_d^*)^\top K_d^{-1} (x_d-x_d^*),\IEEEnonumber
\end{IEEEeqnarray}
Theorem 4.5 in\cite{Cherukuri2016} can be applied and its results can be concluded; i.e., under the primal-dual dynamics \eqref{eq:PDD}-\eqref{eq:AlgebraicConstruction}, the set of primal-dual solutions of the problem \eqref{eq:Model-based_Program} is asymptotically stable, and convergence of each solution is to a point. Thus, the steady state of the closed-loop system, which is the optimal solution to this problem, is asymptotically stable.
\end{proof}

\subsection{Reduction of Communication Traffic}
The controller \eqref{eq:PDD}-\eqref{eq:AlgebraicConstruction}, continuously receives $y_i$ from every dc-GFM station $i\in \mathcal{N}$ and $G_{Pi}$ from every ac-GFM station $i\in \mathcal{M}$ and it forwards the elements of the vector $x_p$ to the dc-GFM MMCs continuously. Implementing this feedback system requires high-frequency sampled communications between the central secondary controller and the MMCs, which may be redundant and not necessary. This is particularly true in steady state where the signals do not change significantly. In this subsection, we propose a mechanism to reduce communication traffic and avoid redundant data transmissions.

In place of the continuously-varying signals $y$, $x_p$, and $G_P$ in \eqref{eq:SecondaryInput}, \eqref{eq:PrimalInput}, and \eqref{eq:ConductanceInput}, we use the piece-wise constant signals $\hat{y}={\rm col}(\hat{y}_1,\ldots,\hat{y}_n)$, $\hat{x}_p={\rm col}(\hat{x}_1^p,\ldots,\hat{x}_n^p)$, and $\hat{G}_P={\rm diag}(\hat{G}_{P1},\ldots,\hat{G}_{Pn})$ and modify these inputs as
\begin{IEEEeqnarray}{c}
u_y=\hat{y}\quad\text{ and }\quad u = \hat{x}_p\quad\text{ and }\quad U_G = \hat{G}_P,\label{eq:ReducedInputs}
\end{IEEEeqnarray}
where the elements of $\hat{y}$, $\hat{x}_p$, and $\hat{G}_P$ are some piece-wise constant signals in time $t$ that are composed of the latest sampled values of $y_i(t)$, $x_i^p(t)$, and $G_{Pi}(t)$ as defined by
\begin{IEEEeqnarray}{rCll}
\label{eq:PWCSignals}
\hat{y}_i(t) &=& y_i(t_{l}^{y_i}),&\quad\forall t\in[t_{l}^{y_i},t_{l+1}^{y_i}),\IEEEyesnumber\IEEEyessubnumber\label{eq:PWCI}\\
\hat{x}_i^p(t) &=& x_i^p(t_{c}^{x_i}),&\quad\forall t\in[t_{c}^{x_i},t_{c+1}^{x_i}),\IEEEyessubnumber\label{eq:PWCxp}\\
\hat{G}_{Pi}(t) &=& G_{Pi}(t_{k}^{G_i}),&\quad\forall t\in[t_{k}^{G_i},t_{k+1}^{G_i}),\IEEEyessubnumber\label{eq:PWCGp}
\end{IEEEeqnarray}
and $t_{0}^{y_i}$, $t_{0}^{x_i}$, and $t_{0}^{G_i}$ are the first sampling (communication) instants at the activation time of the controller. Equation \eqref{eq:PWCI} shows that the communicated signal $\hat{y}_i$ is held constant between every consecutive sampling instants $t_{l}^{y_i}$ and $t_{l+1}^{y_i}$. Similarly, \eqref{eq:PWCxp} (resp. \eqref{eq:PWCGp}) shows that the signal $\hat{x}_i^p$ (resp. $\hat{G}_{Pi}$) is held constant between every consecutive sampling instants $t_{c}^{x_i}$ and $t_{c+1}^{x_i}$ (resp. $t_{k}^{G_i}$ and $t_{k+1}^{G_i}$). Let ${\rm inf}(\cdot)$ denote the infimum and $\vee$ the logical or. The next sampling instants $t_{l+1}^{y_i}$, $t_{c+1}^{x_i}$, and $t_{k+1}^{G_i}$ are then determined by
\begin{IEEEeqnarray}{rCl}
\label{eq:SamplingInstants}
t_{l+1}^{y_i} &=& {\rm inf}\{ t > t_{l}^{y_i}+T_{\rm min} \:|\: \mathcal{C}_\sigma^y \vee \mathcal{C}_T^y\},\qquad\IEEEyesnumber\IEEEyessubnumber\label{eq:SamplingInstanty}\\
t_{c+1}^{x_i} &=& {\rm inf}\{ t > t_{c}^{x_i}+T_{\rm min} \:|\: \mathcal{C}_\sigma^x \vee \mathcal{C}_T^x\},\qquad\IEEEyessubnumber\label{eq:SamplingInstantxp}\\
t_{k+1}^{G_{i}} &=& {\rm inf}\{ t > t_{k}^{G_{i}}+T_{\rm min} \:|\: \mathcal{C}_\sigma^G \vee \mathcal{C}_T^G\},\qquad\IEEEyessubnumber\label{eq:SamplingInstantG}
\end{IEEEeqnarray}
where $\mathcal{C}_\sigma^y$, $\mathcal{C}_T^y$, $\mathcal{C}_\sigma^x$, $\mathcal{C}_T^x$, $\mathcal{C}_\sigma^G$, $\mathcal{C}_T^G$ are the triggering conditions
\begin{IEEEeqnarray}{lCl}
\mathcal{C}_\sigma^y: |y_i(t)\!-\!y_i(t_{l}^{y_i})|>\sigma_y, &\quad\qquad& \mathcal{C}_T^y: t-t_{l}^{y_i}>T_{\rm max},\IEEEnonumber\\
\mathcal{C}_\sigma^x: |x_i^p(t)\!-\!x_i^p(t_{c}^{x_i})|>\sigma_x, &\quad\qquad& \mathcal{C}_T^x: t-t_{c}^{x_i}>T_{\rm max},\IEEEnonumber\\
\mathcal{C}_\sigma^G: |G_{Pi}(t)\!-\!G_{Pi}(t_{k}^{G_{i}})|>\sigma_G, &\quad\qquad& \mathcal{C}_T^G: t-t_{k}^{G_{i}}>T_{\rm max}.\IEEEnonumber
\end{IEEEeqnarray}

Equation \eqref{eq:SamplingInstanty} means that, for each time interval, the next sampling instant $t_{l+1}^{y_i}$ is at least $T_{\rm min}$ and at most $T_{\rm max}$ seconds later than the previous instant $t_{l}^{y_i}$, and it may be when the signal $y_i(t)$ deviates so much from its latest sampled value $y_i(t_{l}^{I_i})$ that the error $|y_i(t)-y_i(t_{l}^{y_i})|$ violates the threshold $\sigma_y$. In fact, the inter-sampling time intervals take a value, depending on the error $|y_i(t)-y_i(t_{l}^{y_i})|$, between the lower bound $T_{\rm min}$ and the upper bound $T_{\rm max}$, i.e., $t_{l+1}^{y_i}-t_{l}^{y_i}\in[T_{\rm min}, T_{\rm max}]$. A similar argument can be made for equations \eqref{eq:SamplingInstantxp}-\eqref{eq:SamplingInstantG} and therefore $t_{c+1}^{x_i}-t_{c}^{x_i}\in[T_{\rm min},T_{\rm max}]$ and $t_{k+1}^{G_i}-t_{k}^{G_i}\in[T_{\rm min},T_{\rm max}]$. The entire scheme is shown in Fig~\ref{Fig:CommScheme} while Algorithms~\ref{LocalDCAlgorithm}-\ref{CentralAlgorithm} describe the proposed central and local sampling mechanisms. It should be noted that, in these algorithms, we also use the following piece-wise constant time signals as decision variables.
\begin{IEEEeqnarray}{rCl}
\label{eq:PWCtimeSignals}
\hat{t}_i^y(t) &=& t_{l}^{y_i},\quad\forall t\in[t_{l}^{y_i},t_{l+1}^{y_i}),\IEEEyesnumber\IEEEyessubnumber\label{eq:PWCtimeI}\\
\hat{t}_i^{x}(t) &=& t_{c}^{x_i},\quad\forall t\in[t_{c}^{x_i},t_{c+1}^{x_i}),\IEEEyessubnumber\label{eq:PWCtimexp}\\
\hat{t}_i^{G}(t) &=& t_{k}^{G_i},\quad\forall t\in[t_{k}^{G_i},t_{k+1}^{G_i}).\IEEEyessubnumber\label{eq:PWCtimeG}
\end{IEEEeqnarray}

\begin{figure}
\centering
\includegraphics[width=0.9\columnwidth]{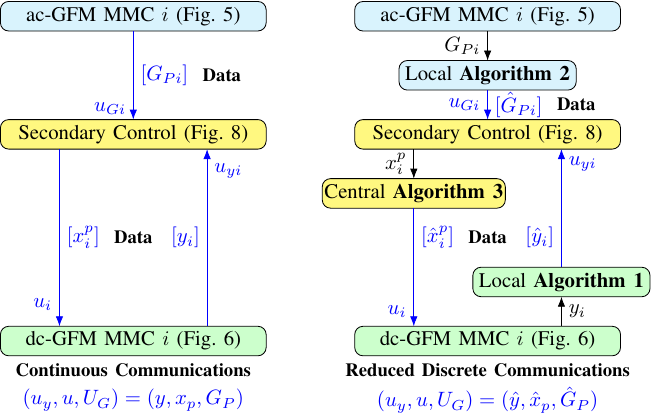}
\caption{Continuous vs. reduced discrete data communications based on the see Algorithms~\ref{LocalDCAlgorithm}-\ref{CentralAlgorithm} for the sampled communication details.\label{Fig:CommScheme}}
\end{figure}
\begin{algorithm}
\small
\caption{Sampling at the $i$th dc-GFM station.}\label{LocalDCAlgorithm}
\begin{algorithmic}[1]
\Statex\textbf{initialization:}
\State Define the memory states $\hat{y}_i$ and $\hat{t}_i^y$
\Statex \Comment{The first sampling instant}
\State Sample: $\hat{y}_i\gets y_i$ and $\hat{t}_{i}^y\gets t$
\State Store the samples $\hat{y}_i$ and $\hat{t}_{i}^y$ and hold them constant
\State Send $\hat{y}_i$ to the central supervisory controller
\While{supervisory control is active}
\State Compute $e_i^y=\hat{y}_i-y_i$ and $\Delta t_i^{y}= t - \hat{t}_{i}^y$
\If{\{$\Delta t_i^{y} \geq T_{\rm min}$\} \textbf{and} \{$|e_i^y|>\sigma_y  \:\textbf{or}\: \Delta t_i^{y} >T_{\rm max}$\} \text{\color{white}. .....} }\text{ }\Comment{Sampling instant}
\State Do lines 2 to 4 of the algorithm
\EndIf
\EndWhile
\end{algorithmic}
\end{algorithm}
\begin{algorithm}
\small
\caption{Sampling at the $i$th ac-GFM station.}\label{LocalACAlgorithm}
\begin{algorithmic}[1]
\Statex\textbf{initialization:}
\State Define the memory states $\hat{G}_{Pi}$ and $\hat{t}_i^G$
\Statex \Comment{The first sampling instant}
\State Sample: $\hat{G}_{Pi}\gets G_{Pi}$ and $\hat{t}_{i}^G\gets t$
\State Store the samples $\hat{G}_{Pi}$ and $\hat{t}_{i}^G$ and hold them constant
\State Send $\hat{G}_{Pi}$ to the central supervisory controller
\While{supervisory control is active}
\State Compute $e_i^G=\hat{G}_{Pi}-G_{Pi}$ and $\Delta t_i^G= t - \hat{t}_{i}^G$
\If{\{$\Delta t_i^G \geq T_{\rm min}$\} \textbf{and} \{$|e_i^G|>\sigma_G  \:\textbf{or}\: \Delta t_i^G >T_{\rm max}$\} \text{\color{white}. .....} }\text{ }\Comment{Sampling instant}
\State Do lines 2 to 4 of the algorithm
\EndIf
\EndWhile
\end{algorithmic}
\end{algorithm}
\begin{algorithm}
\small
\caption{The central sampling.}\label{CentralAlgorithm}
\begin{algorithmic}[1]
\Statex\textbf{initialization:}
\State Define the memory states $\hat{x}_i^p$ and $\hat{t}_i^{x}$ for all $i=1,\cdots,n$
\ForAll{$i$}\Comment{The first sampling instant}
\State Sample: $\hat{x}_i^p\gets x_i^p$ and $\hat{t}_i^{x}\gets t$
\State Store the samples $\hat{x}_i^p$ and $\hat{t}_{i}^{x}$ and hold them constant
\State Send $\hat{x}_i^p$ to the $i$th station
\EndFor
\While{supervisory control is active \textbf{for all} $i$}
\State Compute $e_i^{x}=\hat{x}_i^p-x_i^p$ and $\Delta t_i^{x}= t - \hat{t}_{i}^{x}$
\If{\{$\Delta t_i^{x} \geq T_{\rm min}$\} \textbf{and} \{$|e_i^{x}|>\sigma_{x} \:\textbf{or}\: \Delta t_i^{x} >T_{\rm max}$\} \text{ }\qquad\quad}\Comment{Sampling instant}
\State Do lines 3 to 5 of the algorithm
\EndIf
\\\textbf{end for}
\EndWhile
\end{algorithmic}
\end{algorithm}
\section{Case Studies}
\label{Sec:CaseStudies}
To verify the effectiveness of the proposed controller, we apply it to the offshore MT-HVdc grid shown in Fig.~\ref{Fig:HVdcGrid}, simulated in MATLAB/Simulink environment. This grid is composed of 12 MMC stations, modeled according to Fig.~\ref{Fig:MMC}, which are interconnected via the lines modeled by Fig.~\ref{Fig:Cable}. The stations 1 to 6 (color coded by green) are connected to the areas 1 to 6 and operate in the dc-GFM mode (Fig.~\ref{Fig:dc-GFM}) while the stations 7 to 12 (color coded by blue) integrate the offshore energy hubs into the HVdc grid and operate in ac-GFM mode (Fig.~\ref{Fig:ac-GFM}). The electrical and control parameters of different components are summarized in Table~\ref{Table:Electric} and Table~\ref{Table:Control}. It is to be noted that the areas 1 to 6, represented by the ac-side box in Fig.~\ref{Fig:MMC}, are modeled by ideal 3-phase voltage sources connected to the ac capacitors via RL sections with the parameters $(L_g,R_g)$ given in the same table. For each offshore hub station, the ac side in Fig.~\ref{Fig:MMC} is modeled by an ideal power source with user-defined active and reactive power setpoints given in Fig.~\ref{Fig:PowerRef}.

\begin{table}
\centering
\caption{Electrical Parameters of the System}
\label{Table:Electric}
\begin{tabular}{|c|c|c|c|c|c|c|}
\hline
\multicolumn{7}{|c|}{\cellcolor{green!20} Area dc-GFM MMC Stations} \\ \hline
\cellcolor{green!20} Station \#           &\cellcolor{green!20} 1 &\cellcolor{green!20} 2 &\cellcolor{green!20} 3 &\cellcolor{green!20} 4 &\cellcolor{green!20} 5 &\cellcolor{green!20} 6 \\ \hline
$S_\star\,{\rm[MVA]}$  & 2000 & 1500 & 750 & 500 & 300 & 1000\\ \hline
$C_{\rm ac}\,{\rm[\mu F]}$  & 5.76 & 4.32 & 2.16 & 1.44 & 0.864 & 2.88\\ \hline
$L_f\,{\rm[H]}$  & 0.09 & 0.08 & 0.055 & 0.04 & 0.06 & 0.05\\ \hline
$R_f\,{\rm[\Omega]}$  & 0.3 & 0.25 & 0.3 & 0.27 & 0.25 & 0.28\\ \hline
$C_{\rm eq}\,{\rm[\mu F]}$  & 57.808 & 43.356 & 21.678 & 14.452 & 8.671 & 28.904\\ \hline
$L_{\rm arm}\,{\rm[H]}$  & 0.029 & 0.032 & 0.035 & 0.02 & 0.03 & 0.025\\ \hline
$R_{\rm arm}\,{\rm[\Omega]}$  & 0.53 & 0.49 & 0.57 & 0.45 & 0.55 & 0.5\\ \hline
$L_g\,{\rm[H]}$  & 0.045 & 0.032 & 0.041 & 0.03 & 0.04 & 0.035\\ \hline
$R_g\,{\rm[\Omega]}$  & 0.015 & 0.01 & 0.013 & 0.009 & 0.012 & 0.01\\ \hline
\end{tabular}
\\\vspace*{1mm}
\begin{tabular}{|c|c|c|c|c|c|c|}
\hline
\multicolumn{7}{|c|}{\cellcolor{cyan!15} Offshore ac-GFM MMC Stations} \\ \hline
\cellcolor{cyan!15}Station \#&\cellcolor{cyan!15} 7 &\cellcolor{cyan!15} 8 &\cellcolor{cyan!15} 9 &\cellcolor{cyan!15} 10 &\cellcolor{cyan!15} 11 &\cellcolor{cyan!15} 12 \\ \hline
$S_\star\,{\rm[MVA]}$  & 1000 & 1000 & 1000 & 1000 & 1000 & 1000\\ \hline
$C_{\rm ac}\,{\rm[\mu F]}$  & 2.88 & 3.168 & 2.736 & 3.024 & 2.592 & 2.8224\\ \hline
$L_f\,{\rm[H]}$  & 0.058 & 0.052 & 0.053 & 0.047 & 0.055 & 0.05\\ \hline
$R_f\,{\rm[\Omega]}$  & 0.39; & 0.29 & 0.25 & 0.31 & 0.33 & 0.32\\ \hline
$C_{\rm eq}\,{\rm[\mu F]}$  & 31.794 & 28.904 & 28.037 & 30.349 & 28.62 & 27.748\\ \hline
$L_{\rm arm}\,{\rm[H]}$  & 0.035 & 0.034 & 0.025 & 0.029 & 0.028 & 0.03\\ \hline
$R_{\rm arm}\,{\rm[\Omega]}$  & 0.55 & 0.52 & 0.49 & 0.46 & 0.49 & 0.51\\ \hline
\end{tabular}
\\\vspace*{1mm}
\begin{tabular}{|c|c|c|c|}
\hline
\multicolumn{4}{|c|}{\cellcolor{gray!20} Transmission Line Model Parameters (Fig.~\ref{Fig:Cable})} \\ \hline
\cellcolor{gray!20}$r_1\,{\rm[\Omega/km]}$ &\cellcolor{gray!20} $r_2\,{\rm[\Omega/km]}$ &\cellcolor{gray!20} $r_3\,{\rm[\Omega/km]}$ &\cellcolor{gray!20} $g\,{\rm[\mu \Omega^{-1}/km]}$ \\ \hline
0.1265 & 0.1504 & 0.0178 & 0.1015 \\ \hline
\cellcolor{gray!20}$l_1\,{\rm[m H/km]}$ &\cellcolor{gray!20} $l_2\,{\rm[m H/km]}$ & \cellcolor{gray!20} $l_3\,{\rm[m H/km]}$ &\cellcolor{gray!20}  $c\,{\rm[\mu F/km]}$ \\ \hline
0.2644 & 7.2865 & 3.6198 & 0.1616 \\ \hline
\end{tabular}
\end{table}
\begin{table}
\centering
\caption{Control Parameters of the System}
\label{Table:Control}
\begin{tabular}{c}
\hline
\cellcolor{gray!20} MMC Station Control Parameters \\ \hline\\[-0.8em]
$V_{\rm dc}^{\rm nom}=620\,{\rm [kV]},\, V_{\rm ac}^{\rm nom}= 313.5\,{\rm[kV]\, RMS},\,f_{\rm nom}=50\,{\rm [Hz]} $\vspace{0.7mm} \\\hline\\[-0.8em]
$ ( k_{PC}^{dq}, k_{IC}^{dq}, k_{PC}^{dc}, k_{IC}^{dc} ) = 1000 ( L_{\rm eq}, R_{\rm eq}, L_{\rm arm}, R_{\rm arm} )$\vspace{0.7mm} \\\hline\\[-0.8em]
$ k_{IV}^{dq} = k_{IV}^{dc} = 5,\,k_{PV}^{dq}=2\sqrt{10 C_{\rm ac}},\, k_{PV}^{dc}=\sqrt{25 C_{\rm dc}}$\vspace{0.7mm} \\\hline\\[-0.8em]
$ k_{IW} = 500,\,  k_{PW}=2\sqrt{k_{IW}},\, W_{\Sigma z}^{\rm set} = C_{\rm eq}(1.25V_{\rm dc}^{\rm nom} )^2$\vspace{0.7mm} \\\hline\\[-0.8em]
$Q_{\rm ac}^{\rm set}=P_{\rm ac}^{\rm set}=0,\,m_Q \!=\! 0.05V_{\rm ac}^{\rm nom}/S_\star,\, m_P \!=\! 0.005 f_{\rm nom}/S_\star\!\!$\vspace{0.7mm} \\\hline\\[-0.8em]
$(\tau_V,\tau_\omega,\tau_G) = (1,0.1,0.03),\,k_{IQ}=2,\,k_{PQ}=0.005k_{IQ}$\vspace{0.7mm} \\\hline\\[-0.8em]
$I_{\star}^{\rm dc} = S_\star/V_{\rm dc}^{\rm nom},\,R_d \!=\! 0.05 V_{\rm dc}^{\rm nom}/I_{\rm dc}^{\rm rated}$\vspace{0.7mm} \\\hline
\end{tabular}
\\\vspace*{1mm}
\begin{tabular}{l|l}
\hline
\multicolumn{2}{c}{\cellcolor{yellow!50} Cost Functions and Primal-Dual Gains for Different Objectives}\\ \hline\\[-0.8em]
$\begin{matrix}
   \qquad\text{Loss Reduction:}
\end{matrix}$ & $\begin{matrix}
    f(u,y)=\tfrac{1}{2}u^\top P_u  u, \quad   P_u= K_G-K_G 1_6\\
    K_p=200\mathcal{I}_6,\,K_d^I=25\mathcal{I}_6,\,K_d^V=25\mathcal{I}_6
\end{matrix}$\\[-0.8em] \\\hline\\[-0.8em]
$\begin{matrix}
   \text{General Quadratic}\\
   \text{Output Programming:}
\end{matrix}$ & $\begin{matrix}
   {\color{white}\Big(} f(u,y)=\tfrac{1}{2}y^\top P_y  y + q_y^\top   y \\
                        P_y={\rm diag}(2.4,5.7,3,4.2,3.6,4.8)\\
   {\color{white}\Big(} q_y=1000\times  {\rm col}(30, 75, 36, 54, 45, 63)\\
   K_p=200\mathcal{I}_6,\,K_d^I=10\mathcal{I}_6,\,K_d^V=10\mathcal{I}_6
\end{matrix}$\\[-0.8em] \\\hline\\[-0.8em]
$\begin{matrix}
   \text{Proportional}\\
   \text{Current Minimization:}
\end{matrix}$ & $\begin{matrix}
    f(u,y)=\tfrac{1000 }{2}y^\top P_y  y\quad\\
    {\color{white}\Big(} P_y={\rm diag}(\tfrac{1}{I_{\star 1}^{\rm dc}},\tfrac{1}{I_{\star 2}^{\rm dc}},\tfrac{1}{I_{\star 3}^{\rm dc}},\tfrac{1}{I_{\star 4}^{\rm dc}},\tfrac{1}{I_{\star 5}^{\rm dc}},\tfrac{1}{I_{\star 6}^{\rm dc}})\\
    K_p=200\mathcal{I}_6,\,K_d^I=10\mathcal{I}_6,\,K_d^V=10\mathcal{I}_6
\end{matrix}$\\[-0.8em] \\\hline
\multicolumn{2}{c}{\cellcolor{yellow!50} Triggering Mechanism Parameters} \\ \hline\\[-0.8em]
Thresholds: & $ ( \sigma_y , \sigma_x , \sigma_G ) = ( 5, 20, 0.0001 )$\\[-0.8em] \\\hline\\[-0.8em]
Time Thresholds: & $ (T_{\rm min} , T_{\rm max} ) = ( 0.01{\rm s}, 1{\rm s}) $\\[-0.8em] \\\hline
\end{tabular}
\end{table}

\begin{figure}
\centering
\includegraphics[width=0.493\columnwidth]{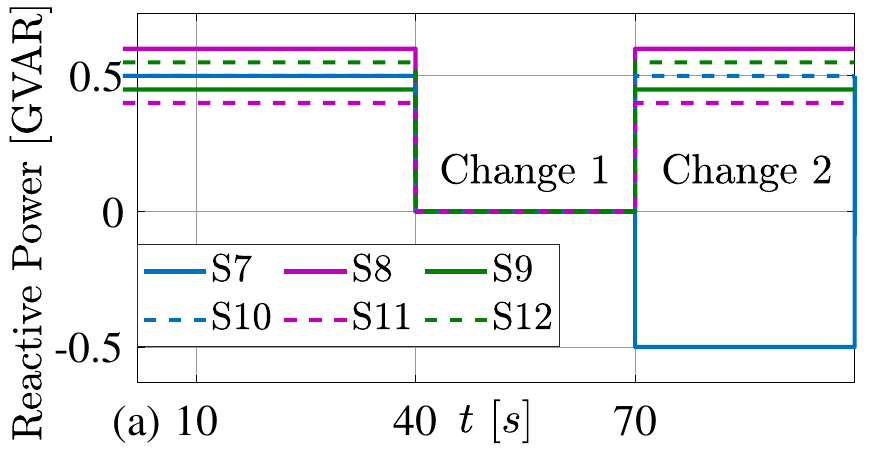}
\includegraphics[width=0.493\columnwidth]{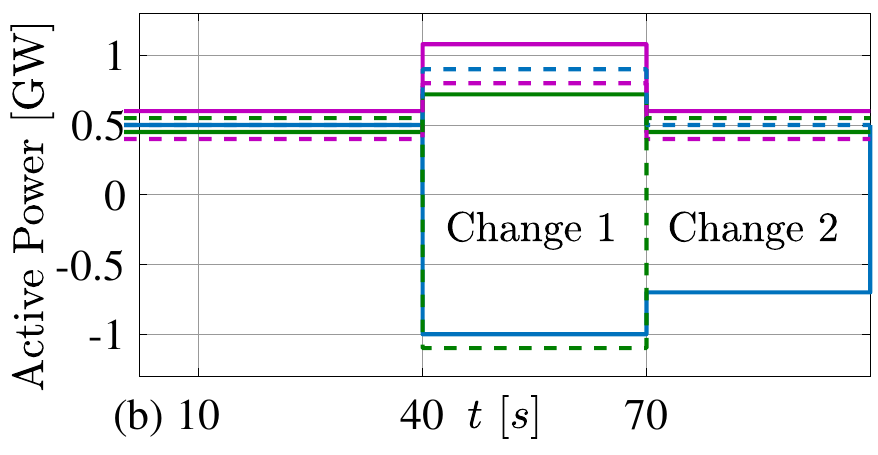}
\caption{Setpoints of the generated reactive and active powers on the ac-side of the offshore hub stations 7 to 12.\label{Fig:PowerRef}}
\end{figure}

\subsection{Case Study 1: Loss Reduction}
In this subsection, we apply the proposed controller to reduce the transmission losses of the HVdc grid. We show the system performance under the proposed controller \eqref{eq:PDD}-\eqref{eq:AlgebraicConstruction} considering the first cost function in Table~\ref{Table:Control}. The gradient of this loss function to be used in \eqref{eq:PDD} is $\nabla f(x_p,u_y)={\rm blkcol}(P_u u,0_6)$. We considered the generation-consumption scenario of offshore stations shown in Fig.~\ref{Fig:PowerRef} and plotted the results in Fig.~\ref{Fig:CaseStudy1}. Fig.~\ref{Fig:CaseStudy1}(a)-(b) show the voltage and current responses under the droop control, while Fig.~\ref{Fig:CaseStudy1}(c)-(d) show the results under the proposed controller activated at $t=10s$. We can see that both controllers maintain the voltage and current limits in steady state -- or more precisely, will converge to a steady state where the operating limits are met -- but they result in different values. As Fig.~\ref{Fig:CaseStudy1}(e) shows, the proposed online feedback optimization (OFO) controller steers the system to a steady state where the grid losses are smaller. This can be better understood from Fig.~\ref{Fig:CaseStudy1}(f). It can be seen that the proposed controller significantly reduces the grid losses compared to the droop control, and that this reduction depends on the generation-consumption condition.
\begin{figure}
\centering
\includegraphics[width=0.493\columnwidth]{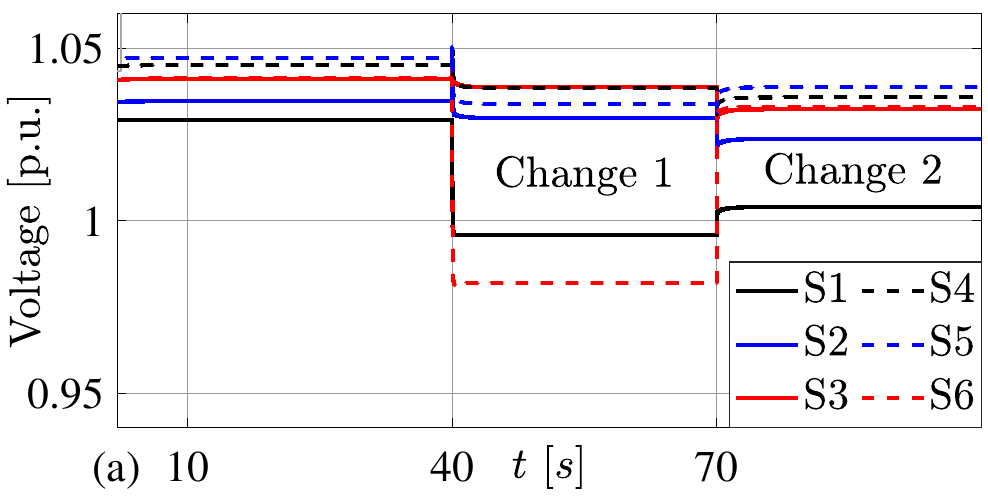}
\includegraphics[width=0.493\columnwidth]{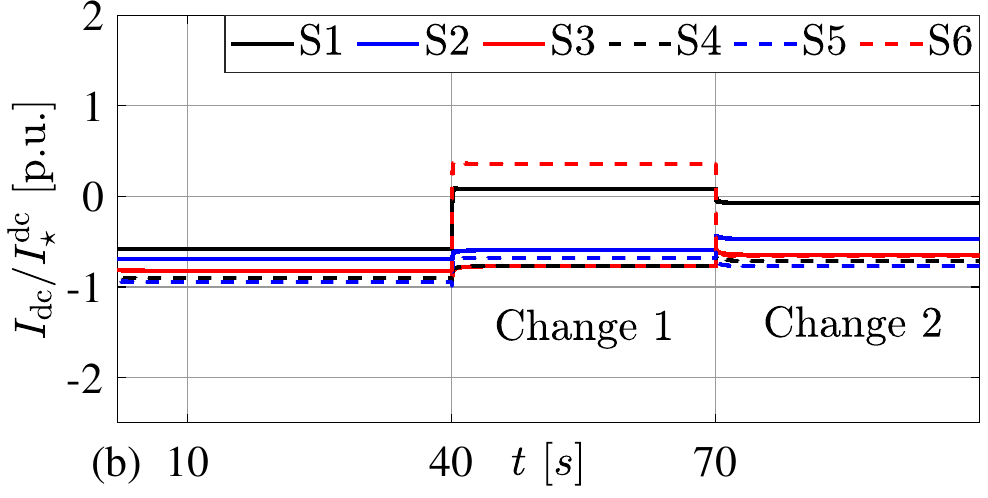}
\includegraphics[width=0.493\columnwidth]{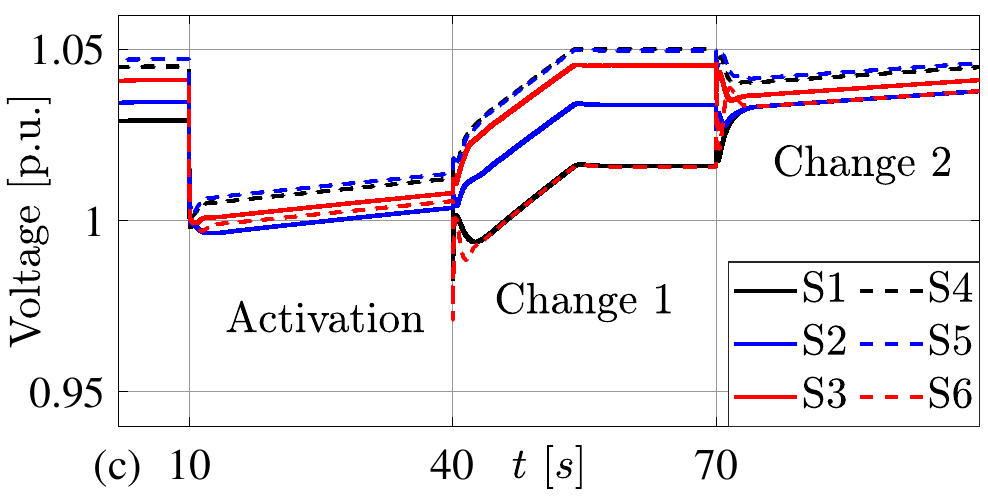}
\includegraphics[width=0.493\columnwidth]{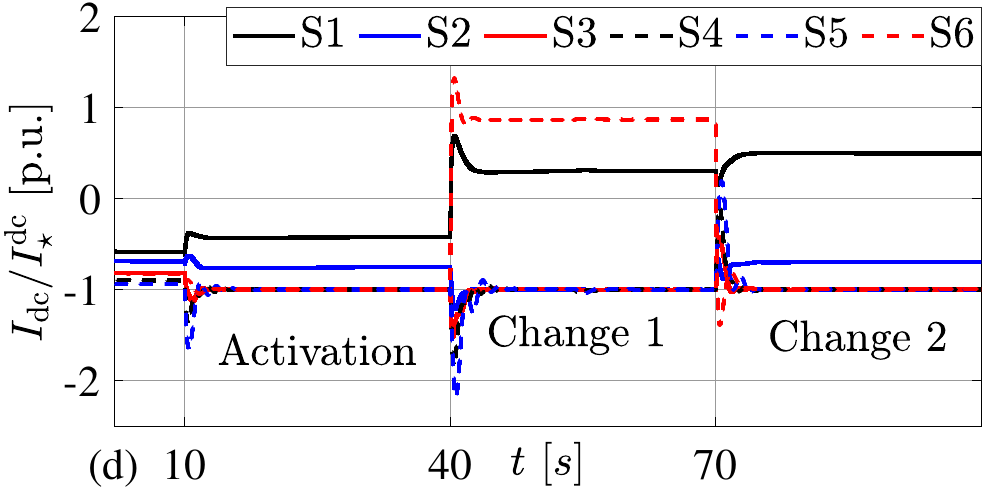}
\includegraphics[width=0.493\columnwidth]{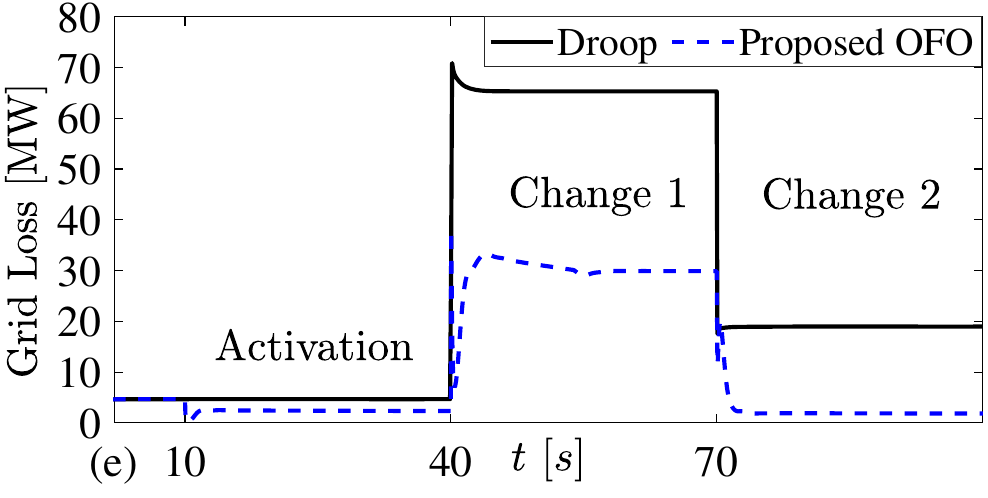}
\includegraphics[width=0.493\columnwidth]{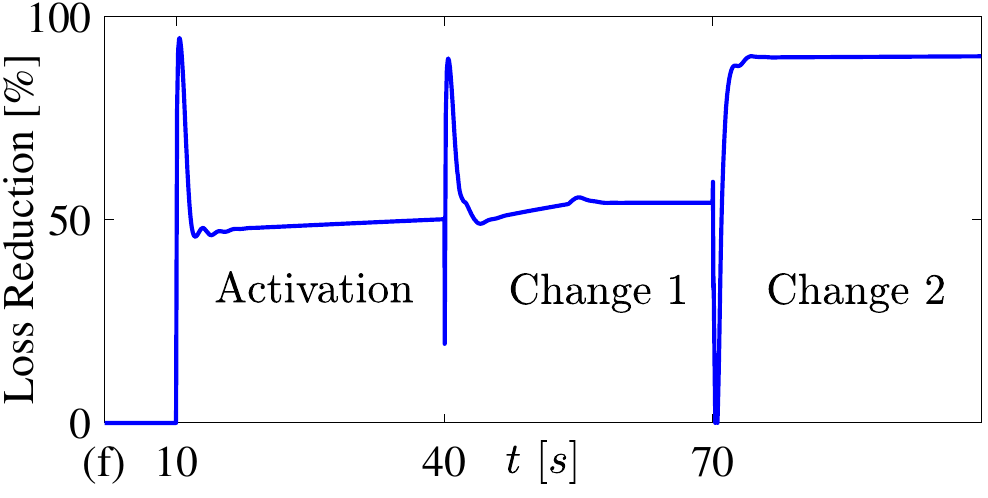}
\caption{Simulation results for Case Study 1; voltages $V_{{\rm dc}i}$ and current ratios $I_{{\rm dc}i}/I_{\star i}^{\rm dc}$ under (a)-(b) droop control and (c)-(d) proposed secondary (OFO) control; (e) grid loss under the two methods and (f) loss reduction under the proposed method compared to droop control.\label{Fig:CaseStudy1}}
\end{figure}


\subsection{Case Study 2: General Quadratic Output Optimization}
In this subsection, we show the system performance under the proposed controller \eqref{eq:PDD}-\eqref{eq:AlgebraicConstruction} considering the second cost function in Table~\ref{Table:Control}. The gradient of this function to be used in \eqref{eq:PDD} is $\nabla f(x_p,u_y)={\rm blkcol}(0_6,P_y u_y + q_y)$. We performed the same scenario as before and showed the results in Fig.~\ref{Fig:CaseStudy2}. Before $t=10s$, the system is controlled by the droop mechanism; we can see that the marginal costs are not equal. According to Fig.~\ref{Fig:CaseStudy2}(a)-(c), after activating the controller at $t=10s$, the voltages take a new formation such that the marginal costs of stations 1, 2 and 6 become equal, while the other stations absorb their maximum current. The new voltage formation is derived from the primal variables (Fig.~\ref{Fig:CaseStudy2}(d)), while the current limits of stations 3, 4, and 5 are maintained by the dual states $\zeta_i^{\rm min}$ (Fig.~\ref{Fig:CaseStudy2}(e)). At $t=40s$, after Change 1, the voltages take a new formation such that the marginal costs of the dc-GFM stations all become equal. In this condition, i.e. for $t\in[40s,70s]$, the principle of equal marginal costs is satisfied without violating the current limits (Fig.~\ref{Fig:CaseStudy2}(b) and Fig.~\ref{Fig:CaseStudy2}(e)). We can also see that the voltages are all within the limits thanks to the dual dynamics. In particular, from Fig.~\ref{Fig:CaseStudy2}(a) and Fig.~\ref{Fig:CaseStudy2}(f), we can see that, thanks to the dual state $\lambda_6^{\rm min}$, the voltage of the 6th station touches the minimum limit and stays there. At $t=70s$, the generation-consumption scenario is changed (Fig.~\ref{Fig:PowerRef}) and the controller reacts accordingly; the voltages take a new shape so that stations 1, 2, 3 and 6 get equal marginal costs, while every station respects the steady-state operating limits.

We have also shown the energy and frequency responses of the dc-GFM stations in Fig.~\ref{Fig:CaseStudy2}(g)-(h), validating our simulations and showing that the changes on the ac side of the offshore stations are properly reflected on the ac side of the area-connected stations. Fig.~\ref{Fig:CaseStudy2}(i) shows the online calculation of the equivalent conductance of the ac-GFM stations (cf. \eqref{eq:ACGFM-QS-Linear}) used to construct the grid input-output sensitivity matrix of the system (cf. \eqref{eq:AlgebraicConstruction}).

\begin{figure*}
\centering
\includegraphics[width=0.325\textwidth]{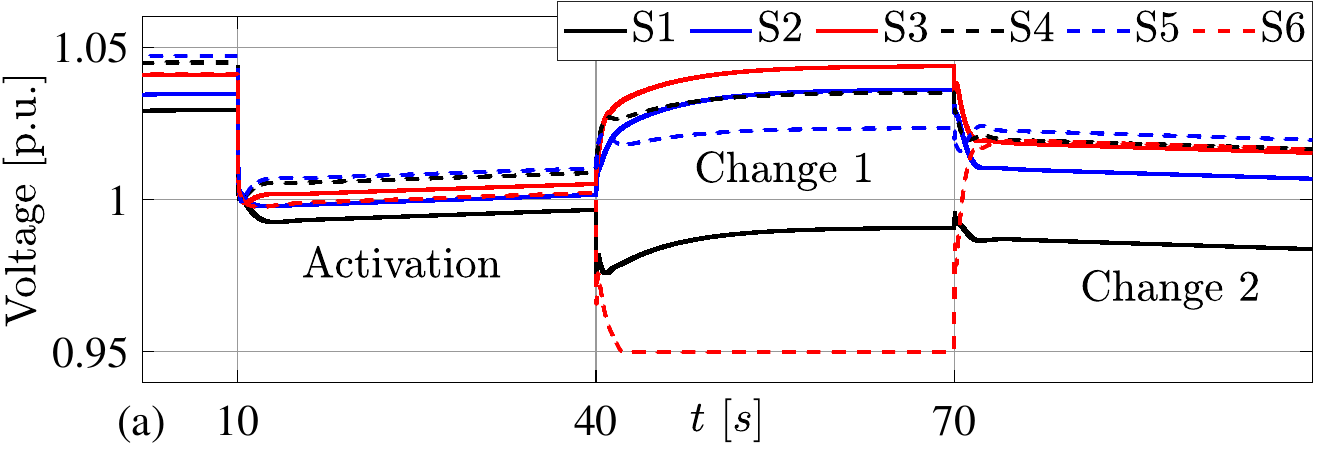}
\includegraphics[width=0.325\textwidth]{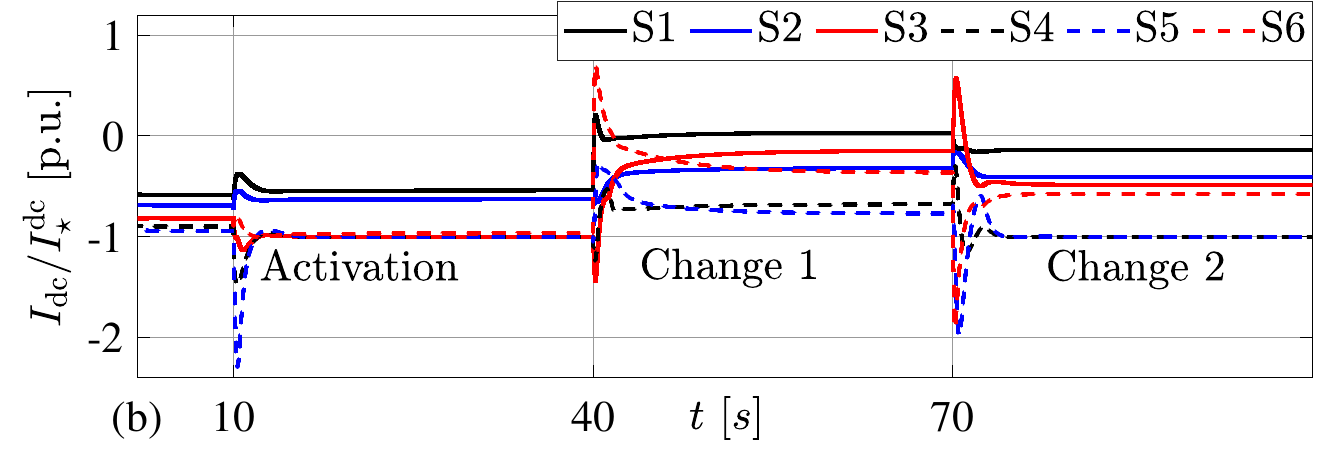}
\includegraphics[width=0.325\textwidth]{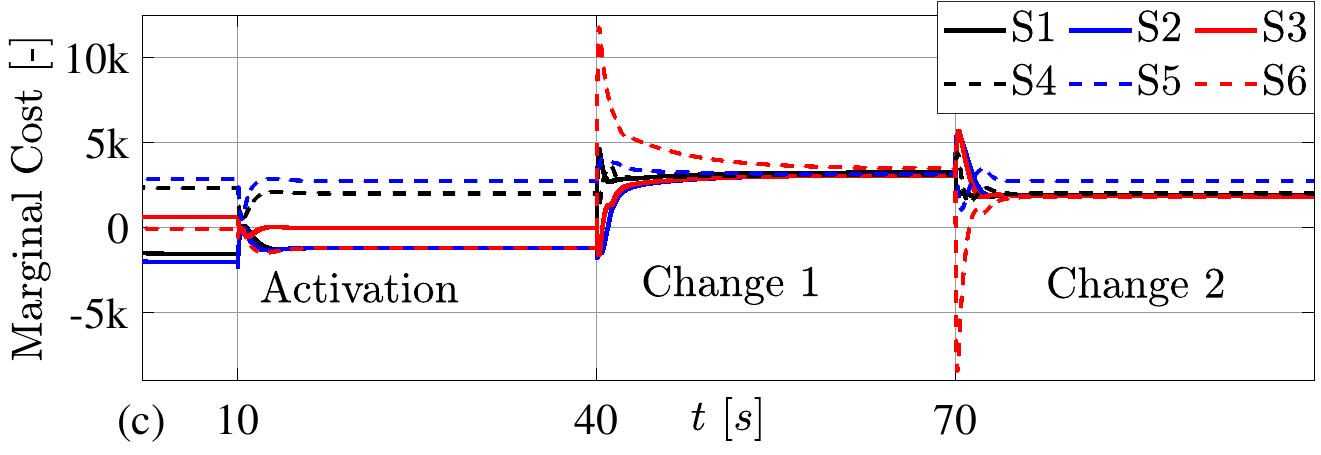}
\includegraphics[width=0.325\textwidth]{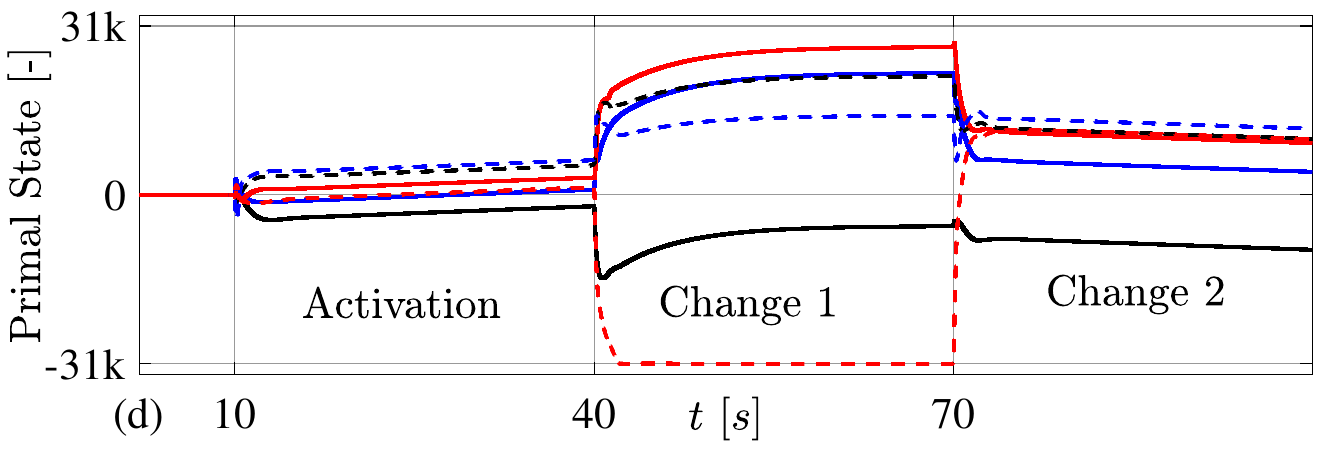}
\includegraphics[width=0.325\textwidth]{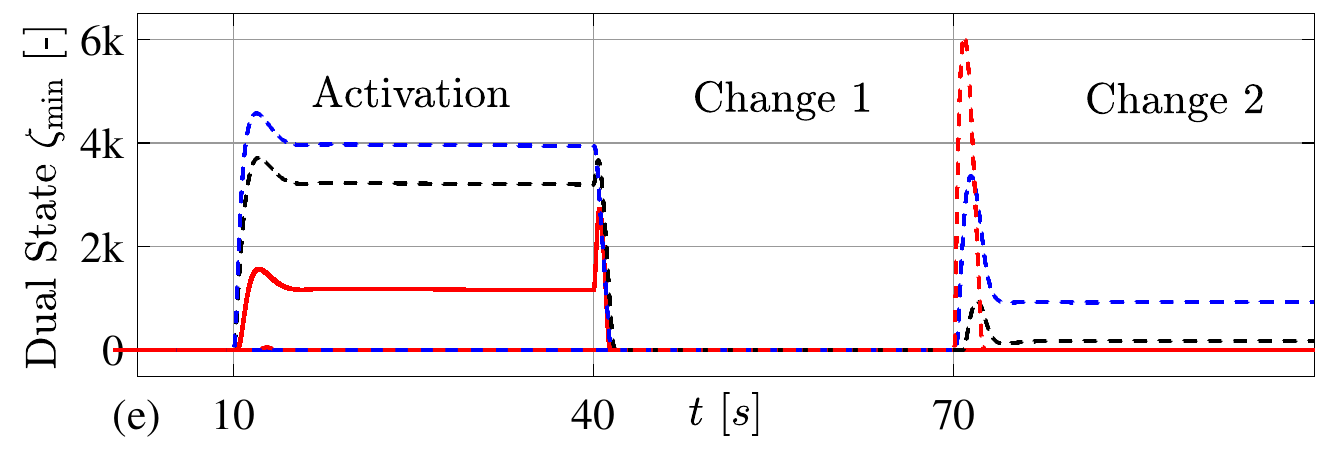}
\includegraphics[width=0.325\textwidth]{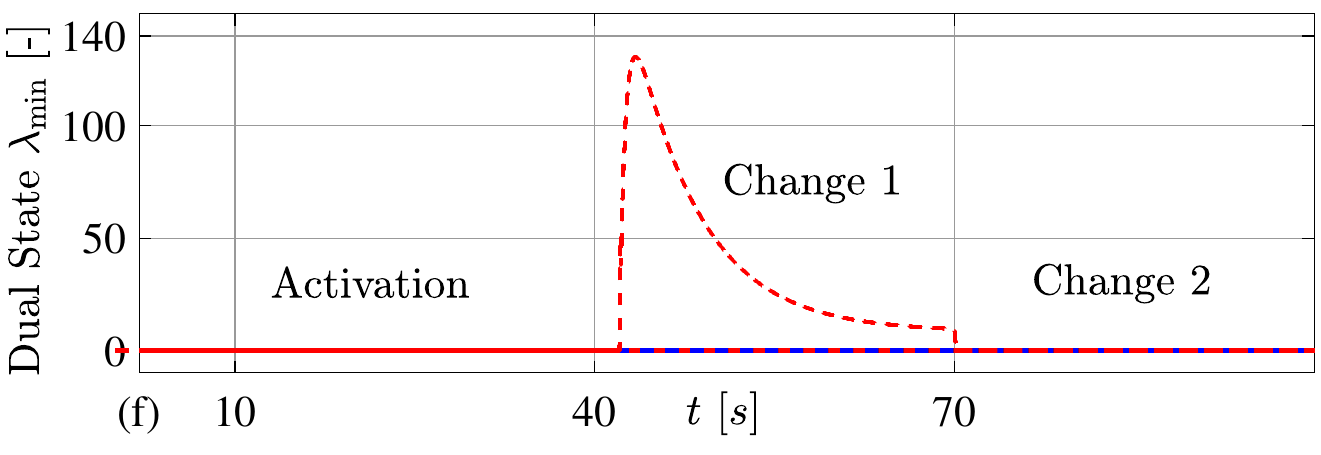}
\includegraphics[width=0.325\textwidth]{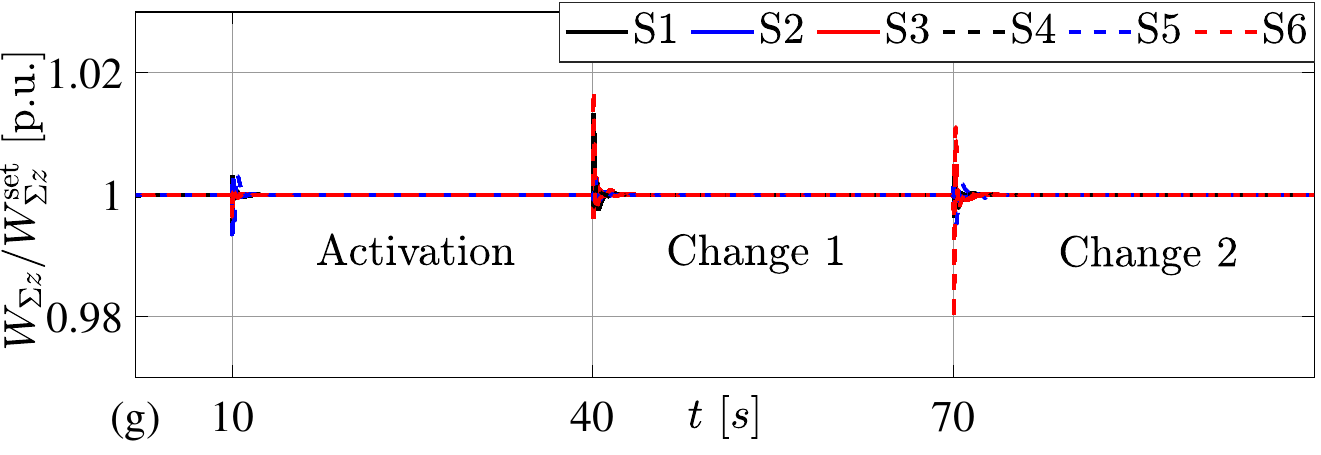}
\includegraphics[width=0.325\textwidth]{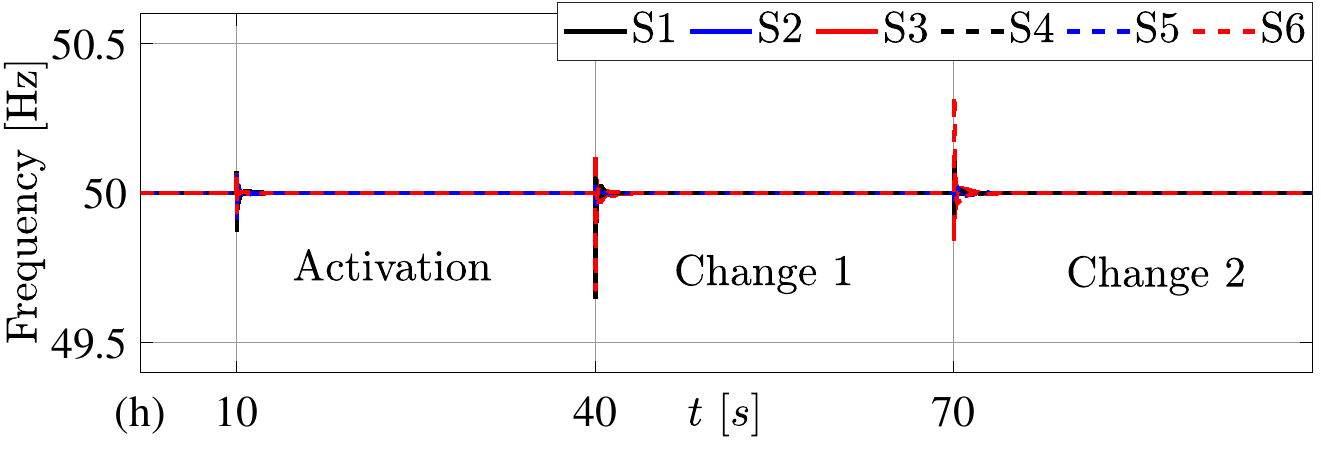}
\includegraphics[width=0.325\textwidth]{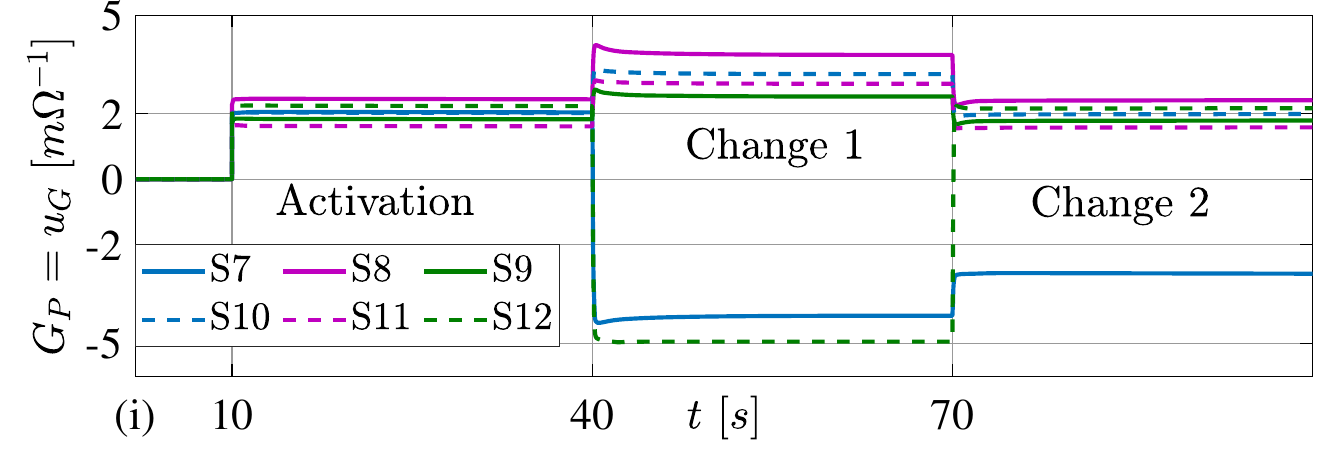}
\caption{Simulation results for Case Study 2; (a) voltages $V_{{\rm dc}i}$, (b) current ratios $I_{{\rm dc}i}/I_{\star i}^{\rm dc}$, (c) marginal costs (elements of $Py I_{\rm dc}+q_y$), (d) primal states $x_i^p$, (e) dual states $\zeta_i^{\rm min}$, (f) dual states $\lambda_i^{\rm min}$, (g) energy ratios $W_{{\Sigma z}i}/W_{{\Sigma z}i}^{\rm set}$, (h) frequencies \smash{$f_i=\tfrac{1}{2}\omega_i/\pi$}, and (i) ac-GFM conductances $G_{Pi}$.\label{Fig:CaseStudy2}}
\end{figure*}
\subsection{Case Study 3: Proportional Current Minimization Under Periodic vs. Reduced Communications}
In this case, we study the application of the proposed controller for proportional current minimization among the dc-GFM stations. The third cost function in Table~\ref{Table:Control} with the gradient $\nabla f(x_p,u_y)={\rm blkcol}(0_6,P_y u_y)$ is used in \eqref{eq:PDD}.
We also compare the results for both conventional periodic and the proposed event-based data transmission mechanisms.
To do this, we interconnect the controller and the system with \eqref{eq:ReducedInputs}, make use of the proposed scheme described in Fig.~\ref{Fig:CommScheme} and Algorithms \ref{LocalDCAlgorithm}-\ref{CentralAlgorithm}, and use the parameters in Table~\ref{Table:Control} for the triggering mechanisms in \eqref{eq:SamplingInstants}.
We ran the same scenario as in the previous case studies. The results under periodic (continuous) communication -- with constant frequency of 100 Hz -- are shown in the first column in Fig.~\ref{Fig:CaseStudy3}, while the second column shows the results under reduced communication. According to these results, the proposed controller can successfully minimize the station currents proportionally. Thus, the current demand is fairly distributed among the dc-GFM stations while respecting the operational limits. Furthermore, the results show almost identical performance under the two different communication mechanisms. However, according to Fig.~\ref{Fig:CommunicationNumber}, under the proposed reduced (event-based) communication mechanism, a significantly lower number of data sampling/transmissions is required. We have plotted the number of communications for different signals in Fig.~\ref{Fig:CommunicationNumber}. It can be seen that the transmission of the signals $\hat{G}_{Pi}$ happens less frequently compared to the signals $\hat{x}_i^p$ and $\hat{y}_i$; this is due to the low variations of $\hat{G}_{Pi}$ compared to the other states. 

\begin{figure}
\centering
\includegraphics[width=0.493\columnwidth]{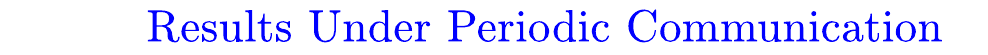}
\includegraphics[width=0.493\columnwidth]{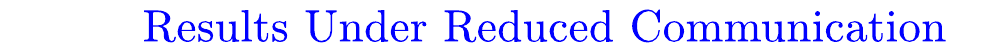}
\includegraphics[width=0.493\columnwidth]{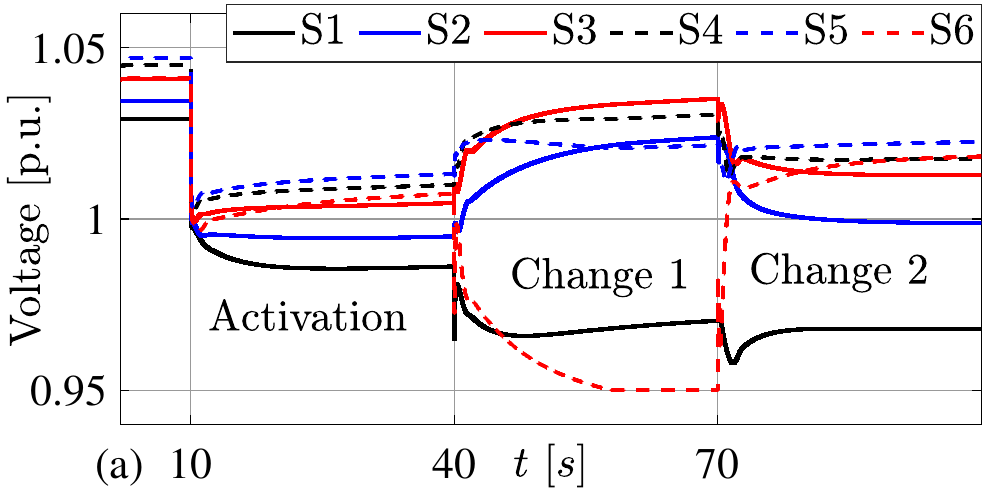}
\includegraphics[width=0.493\columnwidth]{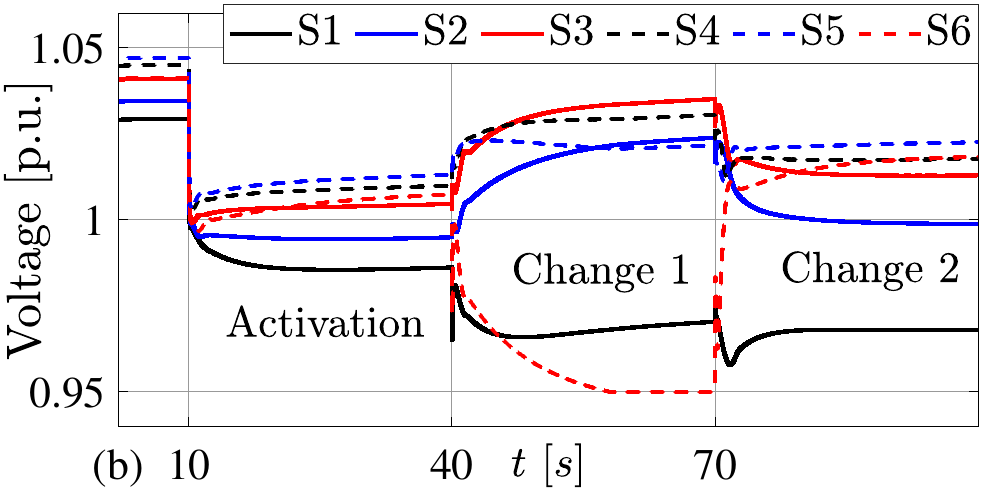}
\includegraphics[width=0.493\columnwidth]{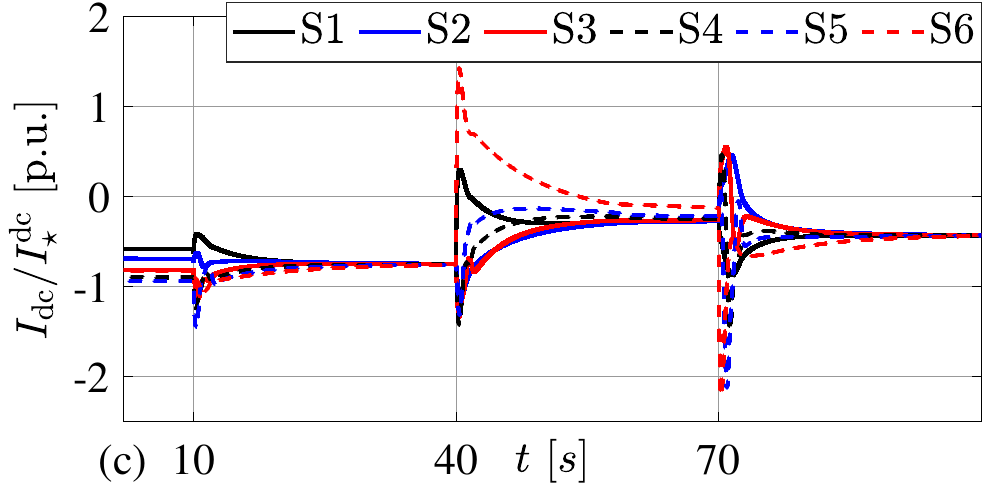}
\includegraphics[width=0.493\columnwidth]{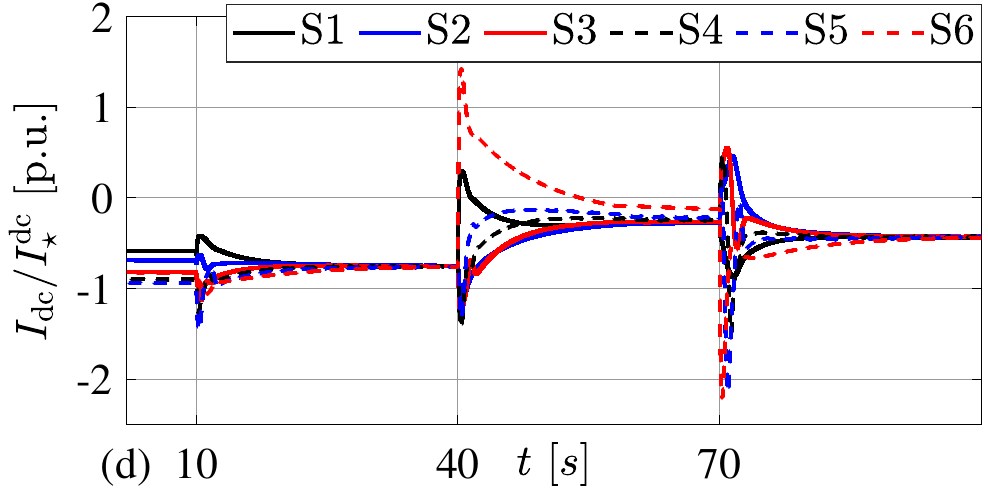}
\includegraphics[width=0.493\columnwidth]{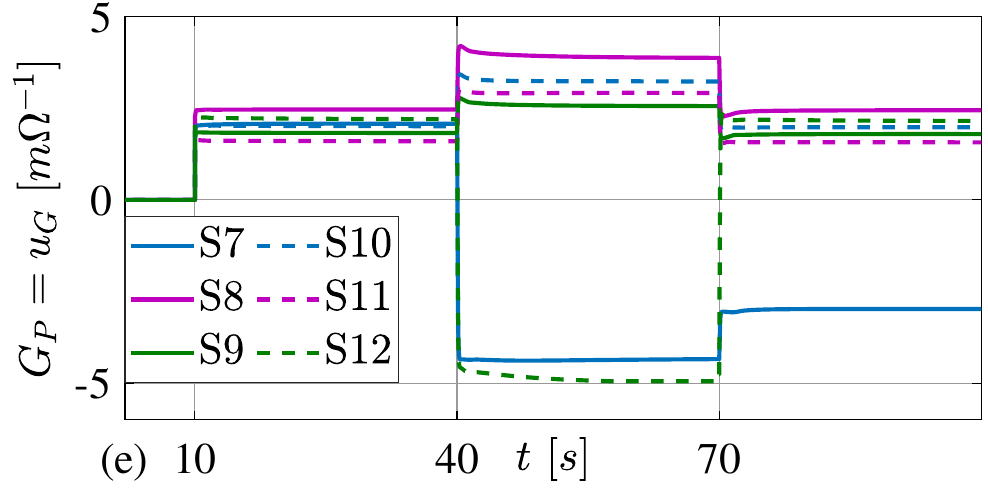}
\includegraphics[width=0.493\columnwidth]{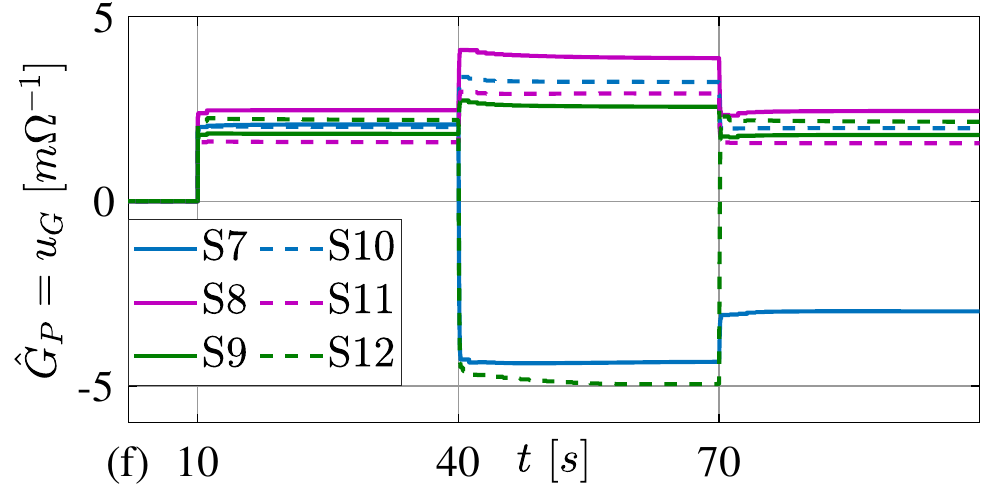}
\caption{Simulation results for Case Study 3; (a)-(b) voltages $V_{{\rm dc}i}$, (c)-(d) current ratios $I_{{\rm dc}i}/I_{\star i}^{\rm dc}$, and (e)-(f) ac-GFM conductances $G_i^P$. The first column shows the responses under constant frequency periodic communication, while the second column shows the results under reduced aperiodic communication.\label{Fig:CaseStudy3}}
\end{figure}
\begin{figure*}
    \centering
    \includegraphics[width=0.325\textwidth]{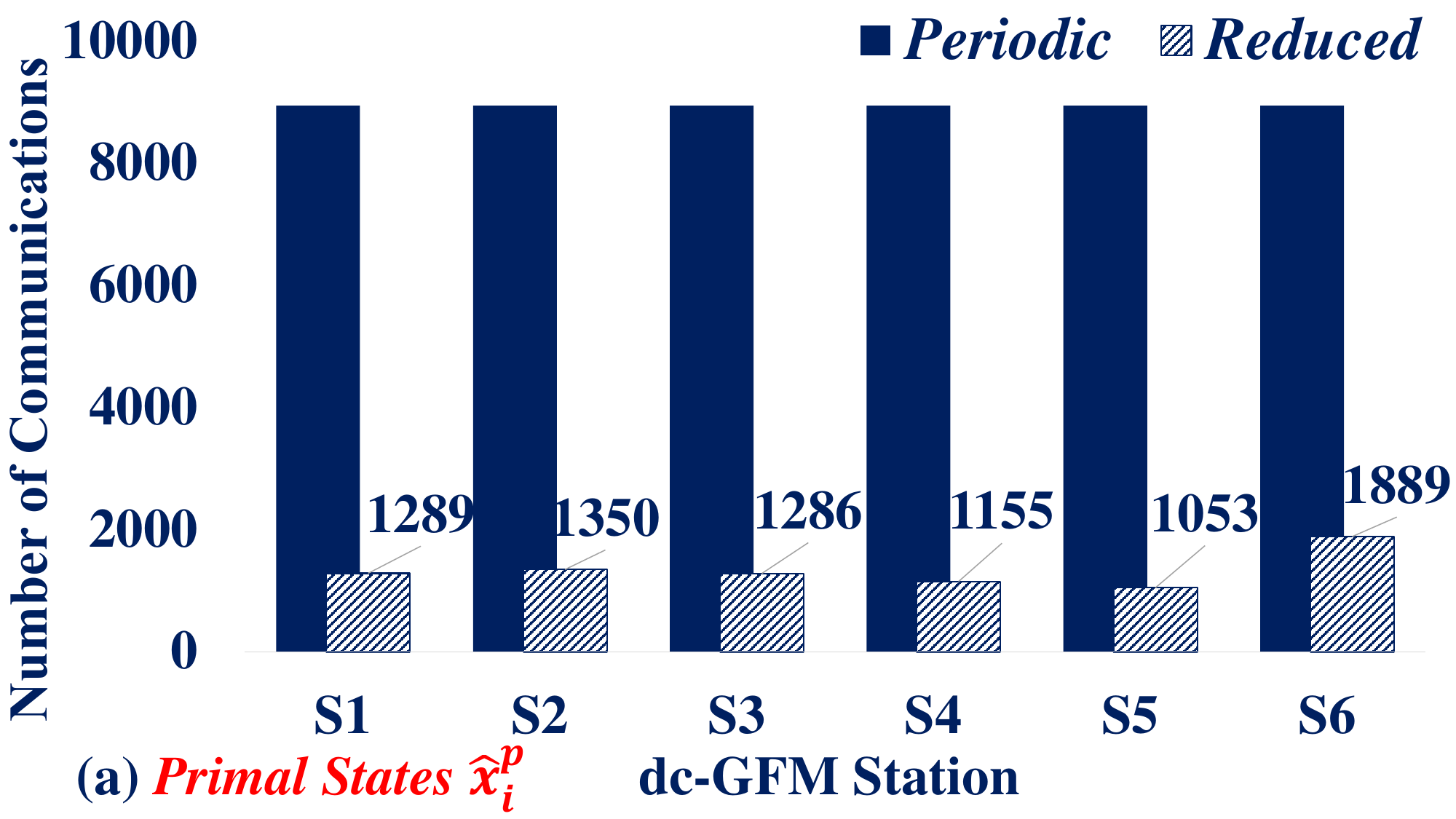}
    \includegraphics[width=0.325\textwidth]{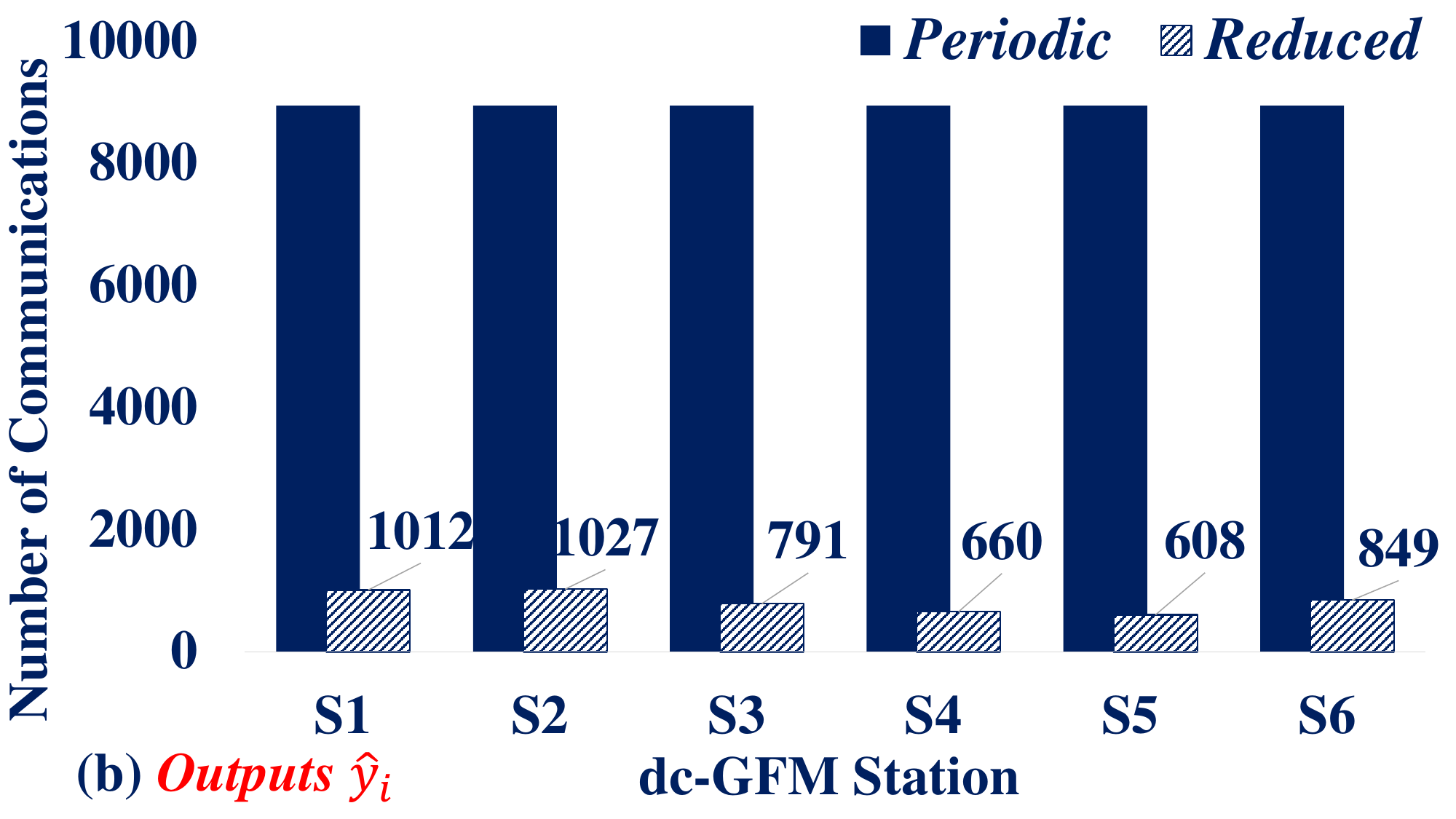}
    \includegraphics[width=0.325\textwidth]{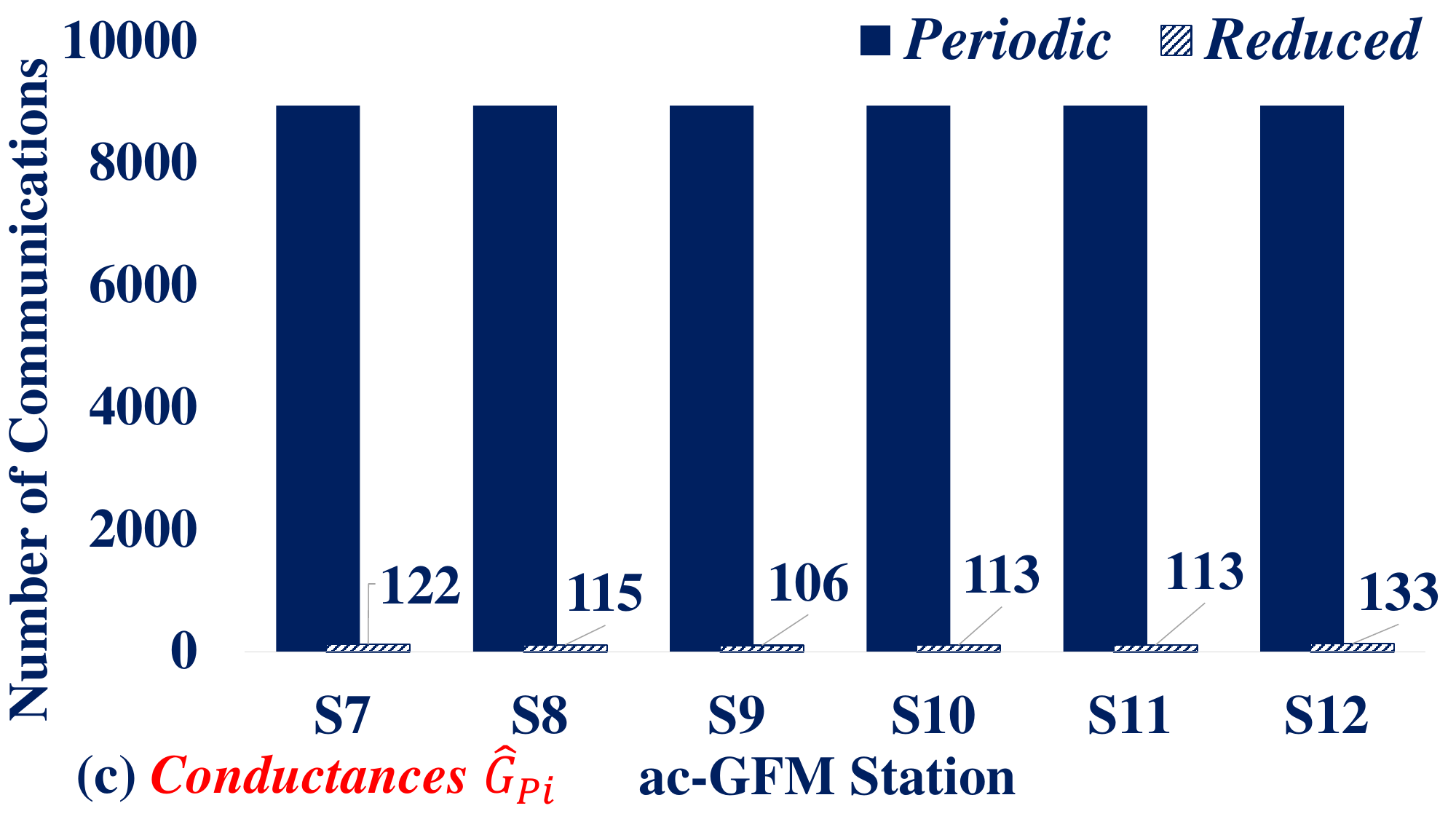}
    \caption{Number of communications (triggers) in Case Study 3 under periodic (100 Hz) vs. reduced communication mechanisms; the numbers are associated with the transmission of (a) primal states $\hat{x}_i^p$ from the central controller to the dc-GFM stations, (b) outputs $\hat{y}_i$ from the dc-GFM stations to the central controller, and (c) conductances $\hat{G}_{Pi}$ from the ac-GFM stations to the central controller.}
    \label{Fig:CommunicationNumber}
\end{figure*}


\section{Conclusion}
\label{Sec:Conclusion}

We have proposed a centralized secondary control for optimal steady-state operation of MT-HVdc grids under both voltage and current stationary limits. Towards this end, we first derived a suitable quasi-stationary model from a general dc transmission network  interconnecting dispatchable dc-GFM and non-dispatchable ac-GFM nodes. We then used this model to design an online feedback constraint optimization that steers the system towards optimality with stability guarantees. More precisely, we defined a general (convex) cost function which can adopt different objectives including loss reduction, economic dispatch and proportional current minimization. Furthermore, we showed that the inclusion of output (current) constraints in the optimization makes the controller rely upon the knowledge of the network model, albeit partially, and thus naturally favoring a centralized implementation. Our detailed simulation case studies based on an offshore MMC-based MT-HVdc grid have demonstrated the applicability of the proposed method for real-time optimization using different objective functions. Moreover, we have also shown that the communication traffic for implementing the proposed controller can be significantly reduced by using an event-based sampling mechanism without sacrificing the performance of the system. Finally, we acknowledge that potentially interesting research directions include replicating the results while adopting a distributed and model-free implementation instead.

\bibliographystyle{IEEEtran.bst}
\bibliography{IEEEabrv,References}

\begin{thebibliography}{10}
\providecommand{\url}[1]{#1}
\csname url@samestyle\endcsname
\providecommand{\newblock}{\relax}
\providecommand{\bibinfo}[2]{#2}
\providecommand{\BIBentrySTDinterwordspacing}{\spaceskip=0pt\relax}
\providecommand{\BIBentryALTinterwordstretchfactor}{4}
\providecommand{\BIBentryALTinterwordspacing}{\spaceskip=\fontdimen2\font plus
\BIBentryALTinterwordstretchfactor\fontdimen3\font minus
  \fontdimen4\font\relax}
\providecommand{\BIBforeignlanguage}[2]{{%
\expandafter\ifx\csname l@#1\endcsname\relax
\typeout{** WARNING: IEEEtran.bst: No hyphenation pattern has been}%
\typeout{** loaded for the language `#1'. Using the pattern for}%
\typeout{** the default language instead.}%
\else
\language=\csname l@#1\endcsname
\fi
#2}}
\providecommand{\BIBdecl}{\relax}
\BIBdecl

\bibitem{Misyris2022}
G.~S. Misyris \emph{et~al.}, ``Zero-inertia offshore grids: N-1 security and
  active power sharing,'' \emph{{IEEE} Trans. Power Syst.}, vol.~37, no.~3, pp.
  2052--2062, May 2022.

\bibitem{ENTSO-E}
\BIBentryALTinterwordspacing
(2022) Entso-e's views on offshore development. [Online]. Available:
  \url{https://www.entsoe.eu/outlooks/offshore-development/}
\BIBentrySTDinterwordspacing

\bibitem{Ludois2014}
D.~C. Ludois and G.~Venkataramanan, ``Simplified terminal behavioral model for
  a modular multilevel converter,'' \emph{{IEEE} Trans. Power Electron.},
  vol.~29, no.~4, pp. 1622--1631, Apr. 2014.

\bibitem{CIGRE}
R.~Wachal \emph{et~al.}, ``Guide for the development of models for hvdc
  converters in a hvdc grid,'' Working Group B4.57, CIGRE, Technical Brochure
  604, Dec. 2014.

\bibitem{Dominic2022}
D.~Groß \emph{et~al.}, ``Dual-port grid-forming control of mmcs and its
  applications to grids of grids,'' \emph{{IEEE} Trans. Power Del.}, vol.~37,
  no.~6, pp. 4721--4735, Dec. 2022.

\bibitem{Soler2023}
J.~Arévalo~Soler \emph{et~al.}, ``Interconnecting power converter control role
  assignment in grids with multiple ac and dc subgrids,'' \emph{{IEEE} Trans.
  Power Del.}, to be published.

\bibitem{Lingwen2014}
L.~Gan and S.~H. Low, ``Optimal power flow in direct current networks,''
  \emph{{IEEE} Trans. Power Syst.}, vol.~29, no.~6, pp. 2892--2904, Nov. 2014.

\bibitem{Kjetil2012}
T.~M. Haileselassie and K.~Uhlen, ``Impact of dc line voltage drops on power
  flow of mtdc using droop control,'' \emph{{IEEE} Trans. Power Syst.},
  vol.~27, no.~3, pp. 1441--1449, Aug. 2012.

\bibitem{Beerten2013}
J.~Beerten and R.~Belmans, ``Analysis of power sharing and voltage deviations
  in droop-controlled dc grids,'' \emph{{IEEE} Trans. Power Syst.}, vol.~28,
  no.~4, pp. 4588--4597, Nov. 2013.

\bibitem{Abdelwahed2017}
M.~A. Abdelwahed and E.~F. El-Saadany, ``Power sharing control strategy of
  multiterminal vsc-hvdc transmission systems utilizing adaptive voltage
  droop,'' \emph{{IEEE} Trans. Sustain. Energy}, vol.~8, no.~2, pp. 605--615,
  Apr. 2017.

\bibitem{Jabr2021}
R.~A. Jabr, ``Mixed-integer convex optimization for dc microgrid droop
  control,'' \emph{{IEEE} Trans. Power Syst.}, vol.~36, no.~6, pp. 5901--5908,
  Nov. 2021.

\bibitem{Spiros2023}
S.~Chapaloglou \emph{et~al.}, ``Optimal generation capacity allocation and
  droop control design for current sharing in dc microgrids,'' \emph{Energies},
  vol.~16, no.~12, 2023.

\bibitem{Alvarez2015}
A.~Egea-Alvarez \emph{et~al.}, ``Hierarchical power control of multiterminal
  hvdc grids,'' \emph{Electric Power Systems Research}, vol. 121, pp. 207--215,
  Apr. 2015.

\bibitem{Gavriluta2015}
C.~Gavriluta \emph{et~al.}, ``Hierarchical control of hv-mtdc systems with
  droop-based primary and opf-based secondary,'' \emph{{IEEE} Trans. Smart
  Grid}, vol.~6, no.~3, pp. 1502--1510, May 2015.

\bibitem{Papangelis2017}
L.~Papangelis \emph{et~al.}, ``Coordinated supervisory control of
  multi-terminal hvdc grids: A model predictive control approach,''
  \emph{{IEEE} Trans. Power Syst.}, vol.~32, no.~6, pp. 4673--4683, Nov. 2017.

\bibitem{CarmonSanchez2020}
J.~Carmona~Sánchez \emph{et~al.}, ``Secondary model predictive control
  architecture for vsc-hvdc networks interfacing wind power,'' \emph{{IEEE}
  Trans. Power Del.}, vol.~35, no.~5, pp. 2329--2341, Oct. 2020.

\bibitem{Eriksson2018}
R.~Eriksson, ``Current sharing in multiterminal dc grids—the analytical
  approach,'' \emph{{IEEE} Trans. Power Syst.}, vol.~33, no.~6, pp. 6278--6288,
  Nov. 2018.

\bibitem{Yogarathinam2019}
A.~Yogarathinam and N.~R. Chaudhuri, ``Stability-constrained adaptive droop for
  power sharing in ac-mtdc grids,'' \emph{{IEEE} Trans. Power Syst.}, vol.~34,
  no.~3, pp. 1955--1965, May 2019.

\bibitem{Shinoda2022}
K.~Shinoda \emph{et~al.}, ``Over- and under-voltage containment reserves for
  droop-based primary voltage control of mtdc grids,'' \emph{{IEEE} Trans.
  Power Del.}, vol.~37, no.~1, pp. 125--135, Feb. 2022.

\bibitem{Zhang2021}
Y.~Zhang \emph{et~al.}, ``Minimization of ac-dc grid transmission loss and dc
  voltage deviation using adaptive droop control and improved ac-dc power flow
  algorithm,'' \emph{{IEEE} Trans. Power Syst.}, vol.~36, no.~1, pp. 744--756,
  Jan. 2021.

\bibitem{Zhang2022}
Y.~Zhang \emph{et~al.}, ``A combined hierarchical and autonomous dc grid
  control for proportional power sharing with minimized voltage variation and
  transmission loss,'' \emph{{IEEE} Trans. Power Del.}, vol.~37, no.~4, pp.
  3213--3224, Aug. 2022.

\bibitem{Song2021}
S.~Song \emph{et~al.}, ``Cost-based adaptive droop control strategy for
  vsc-mtdc system,'' \emph{{IEEE} Trans. Power Syst.}, vol.~36, no.~1, pp.
  659--669, Jan. 2021.

\bibitem{Xie2023}
X.~Xie \emph{et~al.}, ``A novel peer-to-peer control strategy for multiterminal
  dc distribution systems,'' \emph{{IEEE} Trans. Smart Grid}, vol.~14, no.~1,
  pp. 785--797, Jan. 2023.

\bibitem{Li2018}
X.~Li \emph{et~al.}, ``Hierarchical control of multiterminal dc grids for
  large-scale renewable energy integration,'' \emph{{IEEE} Trans. Sustain.
  Energy}, vol.~9, no.~3, pp. 1448--1457, Jul. 2018.

\bibitem{Wang2021}
W.~Wang \emph{et~al.}, ``A distributed cooperative control based on consensus
  protocol for vsc-mtdc systems,'' \emph{{IEEE} Trans. Power Syst.}, vol.~36,
  no.~4, pp. 2877--2890, Jul. 2021.

\bibitem{Aram2018}
A.~Kirakosyan \emph{et~al.}, ``Control approach for the multi-terminal hvdc
  system for the accurate power sharing,'' \emph{{IEEE} Trans. Power Syst.},
  vol.~33, no.~4, pp. 4323--4334, Jul. 2018.

\bibitem{Aram2023}
A.~Kirakosyan \emph{et~al.}, ``A novel control technique for enhancing the
  operation of mtdc grids,'' \emph{{IEEE} Trans. Power Syst.}, vol.~38, no.~1,
  pp. 559--571, Jan. 2023.

\bibitem{Wang2020}
Z.~Wang \emph{et~al.}, ``Distributed control of vsc-mtdc systems considering
  tradeoff between voltage regulation and power sharing,'' \emph{{IEEE} Trans.
  Power Syst.}, vol.~35, no.~3, pp. 1812--1821, May 2020.

\bibitem{Lotfifard2022}
S.~Lotfifard \emph{et~al.}, ``Distributed cooperative voltage control of
  multiterminal high-voltage dc systems,'' \emph{{IEEE} Syst. J.}, vol.~16,
  no.~1, pp. 176--184, Mar. 2022.

\bibitem{Zhang2020}
Y.~Zhang \emph{et~al.}, ``Distributed voltage regulation and automatic power
  sharing in multi-terminal hvdc grids,'' \emph{{IEEE} Trans. Power Syst.},
  vol.~35, no.~5, pp. 3739--3752, Sep. 2020.

\bibitem{Yang2022}
Q.~Yang \emph{et~al.}, ``Pi consensus-based integrated distributed control of
  mmc-mtdc systems,'' \emph{{IEEE} Trans. Power Syst.}, to be published, doi:
  10.1109/TPWRS.2022.3179530.

\bibitem{Krishnamoorthy2022}
D.~Krishnamoorthy and S.~Skogestad, ``Real-time optimization as a feedback
  control problem – a review,'' \emph{Computers \& Chemical Engineering},
  vol. 161, p. 107723, May 2022.

\bibitem{Emiliano2018}
E.~Dall’Anese and A.~Simonetto, ``Optimal power flow pursuit,'' \emph{{IEEE}
  Trans. Smart Grid}, vol.~9, no.~2, pp. 942--952, Mar. 2018.

\bibitem{Ortman2020}
L.~Ortmann \emph{et~al.}, ``Experimental validation of feedback optimization in
  power distribution grids,'' \emph{Electric Power Systems Research}, vol. 189,
  p. 106782, Dec. 2020.

\bibitem{Ortman2022}
L.~Ortmann \emph{et~al.}, ``Online feedback optimization for transmission grid
  operation,'' \emph{arXiv:2212.07795}, 2022.

\bibitem{Häberle2021}
V.~Häberle \emph{et~al.}, ``Non-convex feedback optimization with input and
  output constraints,'' \emph{IEEE Control Systems Letters}, vol.~5, no.~1, pp.
  343--348, Jan. 2021.

\bibitem{Hauswirth2021}
A.~Hauswirth \emph{et~al.}, ``Optimization algorithms as robust feedback
  controllers,'' \emph{arXiv:2103.11329}, 2021.

\bibitem{Abdolmaleki2019}
B.~Abdolmaleki \emph{et~al.}, ``An instantaneous event-triggered hz–watt
  control for microgrids,'' \emph{{IEEE} Trans. Power Syst.}, vol.~34, no.~5,
  pp. 3616--3625, Sep. 2019.

\bibitem{Sanchez2020}
E.~Sánchez-Sánchez \emph{et~al.}, ``Optimal multivariable mmc energy-based
  control for dc voltage regulation in hvdc applications,'' \emph{{IEEE} Trans.
  Power Del.}, vol.~35, no.~2, pp. 999--1009, Apr. 2020.

\bibitem{Gilbert2015}
G.~Bergna~Diaz \emph{et~al.}, ``Small-signal state-space modeling of modular
  multilevel converters for system stability analysis,'' in \emph{2015 IEEE
  Energy Conversion Congress and Exposition (ECCE)}, Montreal, QC, Canada, Sep.
  2015, pp. 5822--5829.

\bibitem{Gilbert2018}
G.~Bergna-Diaz \emph{et~al.}, ``Energy-based state-space representation of
  modular multilevel converters with a constant equilibrium point in
  steady-state operation,'' \emph{{IEEE} Trans. Power Electron.}, vol.~33,
  no.~6, pp. 4832--4851, Jun. 2018.

\bibitem{Yazdani2010}
A.~Yazdani and R.~Iravani, \emph{Voltage-Sourced Converters in Power Systems:
  Modeling, Control, and Applications}.\hskip 1em plus 0.5em minus 0.4em\relax
  John Wiley \& Sons, 2010.

\bibitem{DorflerKR2013}
F.~D\"orfler and F.~Bullo, ``Kron reduction of graphs with applications to
  electrical networks,'' \emph{IEEE Transactions on Circuits and Systems I:
  Regular Papers}, vol.~60, no.~1, pp. 150--163, Jan. 2013.

\bibitem{Boyd}
S.~Boyd and L.~Vandenberghe, \emph{Convex Optimization}.\hskip 1em plus 0.5em
  minus 0.4em\relax USA: Cambridge University Press, 2004.

\bibitem{Cherukuri2016}
A.~Cherukuri \emph{et~al.}, ``Asymptotic convergence of constrained
  primal–dual dynamics,'' \emph{Systems \& Control Letters}, vol.~87, pp.
  10--15, Jan. 2016.

\end{thebibliography}

\end{document}